\definecolor{blu}{RGB}{0,114,188}
\declaretheorem[name=Definition]{definition}
\declaretheorem[name=Theorem, numberlike=definition]{theorem}
\declaretheorem[name=Proposition, numberlike=definition]{prop}
\declaretheorem[name=Lemma, numberlike=definition]{lem}
\declaretheorem[name=Corollary,numberlike=definition]{cor}
\crefname{thm}{Theorem}{Theorems}
\crefname{defn}{Definition}{Definitions}
\crefname{prop}{Proposition}{Propositions}
\crefname{lem}{Lemma}{Lemmas}
\crefname{cor}{Corollary}{Corollaries}
\numberwithin{definition}{section}
\numberwithin{prop}{section}
\numberwithin{theorem}{section}
\numberwithin{cor}{section}
\titleformat{\section}{\large\bfseries}{\thesection.}{1em}{}
\title{Dirac Traces and the Tutte Polynomial}
\author{Joshua Lin \orcidlink{0000-0002-3353-1559} \\ \small{joshlin@mit.edu}}
\date{\small{\textit{Center for Theoretical Physics, Massachusetts Institute of Technology, \\Cambridge, MA 02139, USA}\\{Preprint Number: MIT-CTP/5793}}}
\begin{document}

\renewcommand{\abstractname}{\vspace{-\baselineskip}}
\maketitle
\vspace{-1cm}
\begin{abstract}
\noindent Perturbative calculations involving fermion loops in quantum field theories require tracing over Dirac matrices. A simple way to regulate the divergences that generically appear in these calculations is dimensional regularisation, which has the consequence of replacing $4$-dimensional Dirac matrices with $d$-dimensional counterparts for arbitrary complex values of $d$. In this work, a connection between traces of $d$-dimensional Dirac matrices and computations of the Tutte polynomial of associated graphs is proven. 
The time complexity of computing Dirac traces is analysed by this connection, and improvements to algorithms for computing Dirac traces are proposed. 
\end{abstract}

\section{Introduction}

Quantum Field Theory (QFT) is a well-established theoretical framework describing various physical systems, from particle physics to condensed matter systems. Often the QFTs of interest are not exactly solvable, and only perturbative expansions in the coupling strength are analytically calculable. A general phenomena is the divergence of loop integrals appearing in perturbative calculations, requiring a choice of a regulator for the theory to render intermediate quantities finite. A particularly convenient regulator often used is dimensional regularisation \cite{tHooft:1972tcz}, which analytically continues the dimension of spacetime to an arbitrary complex parameter $d$. For QFTs containing fermions, fermion loops come with dimensionally regulated traces over associated products of Dirac matrices. For example, in Quantum Electro-Dynamics (QED) with a single massless fermion, the two-loop photon vacuum polarization (contracted against the metric tensor for simplicity) contains the following diagram as one of the contributions:
\begin{equation}\label{eq:examplecalc}
\begin{tikzpicture}[baseline=(a.base)]
    \begin{feynman}
      \vertex (a) at (0,0);
      \vertex (b) at (1,0);
      \vertex (c) at (3,0);
      \vertex (d) at (4,0);
      
      \vertex (g) at (2,0.95);
      \vertex (j) at (2,-0.95);
      \diagram*{
        (a) -- [gluon] (b);
        (c) -- [gluon,momentum=$k$] (d);
        (b) -- [fermion, quarter left,momentum=$k_1$] (g) -- [fermion, quarter left] (c) -- [fermion,quarter left] (j) -- [fermion, quarter left] (b);
        (g) -- [gluon,momentum=$k_2$] (j);
      };
      \node at (-0.25,-0.05) {$\nu$};
      \node at (4.25,-0.05) {$\nu$};
      \node at (7.6,0.5) {$=e^4 \ \mathrm{Tr}_d\left(\gamma_{\nu} \gamma_{\mu_1} \gamma_{\mu_5} \gamma_{\mu_2} \gamma_{\nu} \gamma_{\mu_3} \gamma_{\mu_5} \gamma_{\mu_4} \right)$};
      \node at (9.5,-0.5) {$\times {\displaystyle{\int}} \dfrac{\mathrm{d}^d k_1 \mathrm{d}^d k_2}{(2\pi)^{2d}} \dfrac{(k_1 - k)^{\mu_1} (k_1 - k_2 - k)^{\mu_2} (k_1 - k_2)^{\mu_3} k_1^{\mu_4}}{(k_1 - k)^2 (k_1 - k_2 - k)^2 (k_1 - k_2)^2 k_1^2 k_2^2}$};
    \end{feynman}
\end{tikzpicture}
\end{equation}
where $e$ is the electromagnetic charge, $\gamma_\mu$ is a Dirac matrix with Dirac index $\mu$, and $\mathrm{Tr}_d$ indicates that the traces are $d$-dimensional traces. There is a tremendous amount of structure in the $d$-dimensional momentum integral piece, and a lot of modern research into scattering amplitudes involves simplifying the calculation of this momentum integral \cite{Travaglini:2022uwo,Abreu:2022mfk}. This work will instead focus on the structure of the $d$-dimensional Dirac trace, $\mathrm{Tr}_d$. After performing the momentum integral in \Cref{eq:examplecalc}, what remains is a linear combination over possible contractions $g_{\mu_1 \mu_2} g_{\mu_3 \mu_4}$, $g_{\mu_1 \mu_3} g_{\mu_2 \mu_4}$ and $g_{\mu_1 \mu_4} g_{\mu_2 \mu_3}$. The main observation made in this work is that the defining relations for gamma matrices in $d$-dimensions:
\begin{equation}\label{eq:gdef}
\{\gamma_\mu,\gamma_\nu\} := \gamma_\mu \gamma_\nu + \gamma_\nu \gamma_\mu = 2 g_{\mu \nu}\mathbbm{1}, \qquad g_{\mu \mu} = d
\end{equation}
can be formulated as a `deletion-contraction relation' on an appropriately constructed graph, relating the value of $\mathrm{Tr}_d$ on one graph to its value on a graph where an edge has been deleted, and a graph where the same edge has been contracted. Such a recurrence relation is very common among polynomial graph invariants, and was initially observed for the chromatic polynomial~\cite{Whitney1932} and number of spanning trees~\cite{Brooks1940}. This was later generalised to a two-variable polynomial known as the Tutte polynomial~\cite{Tutte:1954_}, which contains many other graph invariants which satisfy deletion-contraction relations as special cases (as shown in \Cref{fig:1}) such as the reliability polynomial \cite{Chari1998ReliabilityPA}, Jones polynomial of alternating knots represented as graphs \cite{THISTLETHWAITE1987297}, and the partition function of the $q$-state Potts models \cite{Fortuin:1971dw} ($q = 2,3$ shown on \Cref{fig:1}). 

\begin{figure}[t]
\begin{center}
\includegraphics{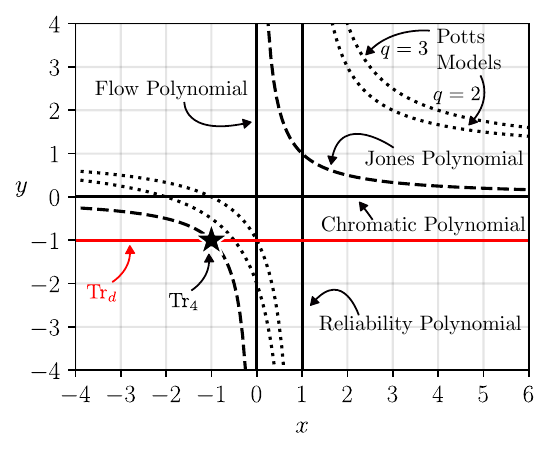}
\caption{Plot of the various restrictions of the Tutte polynomial $T(x,y)$ in the Tutte plane spanned by the two variables $x$ and $y$. The dimensionally regulated trace $\mathrm{Tr}_d$ arises from evaluations along the line $y = -1$, with the point $(x,y) = (-1,-1)$ corresponding to the four dimensional trace $\mathrm{Tr}_4$. 
}\label{fig:1}
\end{center}
\end{figure}

The main theorem of this work relates the $d$-dimensional trace of a string of Dirac matrices contracted in some arbitrary way to an evaluation of the Tutte polynomial of a corresponding graph along the line $y = -1$, as depicted in \Cref{fig:1}.
\newtheorem*{mainthm}{Theorem \ref{mainthm}}
\begin{mainthm}[reworded]
Suppose the tuple $x = (x_1,x_2,\dots,x_{2n})$ contains the integers $1,\dots,n$ each repeated twice in some order. Then
\begin{equation}\label{eq:mf}
\mathrm{Tr}_d(\gamma_{\mu_{x_1}} \cdots \gamma_{\mu_{x_{2n}}}) =  
4 (-1)^{|E|}(-2)^{n - c(\mathrm{Gr}(x))} \ d^{\hspace{.05cm}c(\mathrm{Gr}(x))} \ T\left(\mathrm{Gr}(x); 1 - \tfrac{d}{2}, -1\right)
\end{equation}
where if $x_1,\dots,x_{2n}$ are placed in order around the circumference of a circle, with straight chords connecting the repeated integers, then $\mathrm{Gr}(x)$ is the graph with a vertex for each chord and an edge connecting pairs of chords that intersect, $c(\mathrm{Gr}(x))$ is the number of connected components of $\mathrm{Gr}(x)$, $|E|$ is the number of edges of $\mathrm{Gr}(x)$, and $T(\mathrm{Gr}(x);x,y)$ is the Tutte Polynomial of $\mathrm{Gr}(x)$ in the variables $x$ and $y$. 
\end{mainthm}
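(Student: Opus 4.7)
The plan is to prove the theorem by induction on the number of edges $|E|$ of the intersection graph $\mathrm{Gr}(x)$. Write $V(G)$ for the right-hand side of \Cref{eq:mf}, so that the theorem reads $\mathrm{Tr}_d(W(x)) = V(\mathrm{Gr}(x))$. The base case $|E|=0$ is straightforward: a non-crossing chord diagram always admits an adjacent pair of equal indices, and iteratively applying $\gamma_{\mu_a}\gamma_{\mu_a}=d$ gives $\mathrm{Tr}_d = 4d^n = V(\mathrm{Gr}(x))$ (since $c(\mathrm{Gr}(x)) = n$, $|E|=0$, and $T$ of an edgeless graph equals $1$).

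The core geometric observation for the inductive step is that \emph{swapping two adjacent Dirac matrices toggles exactly one edge of} $\mathrm{Gr}(x)$: if positions $p$ and $p+1$ carry chords $i,j$, then the crossing status of every pair other than $\{i,j\}$ is unchanged, because no chord can have an endpoint strictly between $p$ and $p+1$. A short combinatorial argument on the chord diagram shows that whenever $\mathrm{Gr}(x)$ has at least one edge, some adjacent pair of positions carries crossing chords. Applying the Clifford relation \Cref{eq:gdef} at such a location yields
\begin{equation*}
\mathrm{Tr}_d(W(x)) \;=\; 2\, g_{\mu_i\mu_j}\cdot \mathrm{Tr}_d(\widetilde W) \;-\; \mathrm{Tr}_d(W(x^{\mathrm{swap}})),
\end{equation*}
where $x^{\mathrm{swap}}$ is $x$ with positions $p,p+1$ transposed (so $\mathrm{Gr}(x^{\mathrm{swap}}) = \mathrm{Gr}(x)-e$) and $\widetilde W$ is $W(x)$ with the two adjacent gammas removed; contracting the free metric $g_{\mu_i\mu_j}$ against the remaining occurrences of $\mu_i,\mu_j$ in $\widetilde W$ identifies chords $i$ and $j$ into a single chord, yielding a chord-merged tuple $x^{\mathrm{merge}}$. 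By the inductive hypothesis, $\mathrm{Tr}_d(W(x)) = 2V(\mathrm{Gr}(x^{\mathrm{merge}})) - V(\mathrm{Gr}(x)-e)$. A routine manipulation of Tutte's recursion at $(x,y)=(1-d/2,-1)$ yields the prefactor identity $V(G) + V(G-e) = 2\,V(G/e)$ for any edge $e$ (bridge or not, with loops not arising since $\mathrm{Gr}(x)$ is simple and loopless), so the induction closes provided one can establish the \emph{replacement lemma} $V(\mathrm{Gr}(x^{\mathrm{merge}})) = V(\mathrm{Gr}(x)/e)$.

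The main obstacle is this replacement lemma, because $\mathrm{Gr}(x^{\mathrm{merge}})$ and $\mathrm{Gr}(x)/e$ are genuinely different graphs: if a third chord $k$ crosses both $i$ and $j$, the chord-merge ``XORs" the two crossings so $k$ has no edge to the merged vertex, whereas the Tutte contraction retains a multi-edge. I would resolve it by showing that \emph{adding a doubled edge between any two existing vertices preserves} $V$ at $y=-1$. Running Tutte's deletion-contraction on one of the doubled edges produces an extra contribution proportional to $(1+y)$ when the two vertices lie in the same component (and so vanishes at $y=-1$); in the different-component case, Tutte's multiplicativity under disjoint and one-point unions gives $T(G) = T(G/mk)$, and the shifts of $c$ by $-1$ and $|E|$ by $+2$ combine with the loop-rule factor $(x-1)$ into an overall prefactor $(-2)(x-1)/d = 1$ at $x=1-d/2$. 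Since $\mathrm{Gr}(x)/e$ is obtained from $\mathrm{Gr}(x^{\mathrm{merge}})$ by adding one doubled edge for each chord $k$ crossing both $i$ and $j$, iterating this invariance establishes the replacement lemma and closes the induction.
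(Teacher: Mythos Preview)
Your overall strategy is sound and genuinely different from the paper's: the paper introduces an auxiliary generalised chromatic polynomial $\chi(G;q,n)$ and verifies the Dirac recursions via a colouring argument on chord diagrams, whereas you work directly with the Tutte polynomial and isolate the clean fact that a doubled edge is invisible to $V$ at $y=-1$, which is exactly what reconciles chord-merging with Tutte contraction. The identity $V(G)+V(G-e)=2V(G/e)$ and the replacement lemma are both correct as you have stated them.

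There is, however, a real gap: the claim that ``whenever $\mathrm{Gr}(x)$ has at least one edge, some adjacent pair of positions carries crossing chords'' is false. A counterexample on $12$ positions is $x=(a,c,c,b,d,d,a,e,e,b,f,f)$, i.e.\ chords $a=(1,7)$, $b=(4,10)$, $c=(2,3)$, $d=(5,6)$, $e=(8,9)$, $f=(11,12)$. Here $a$ and $b$ cross, so $|E(\mathrm{Gr}(x))|=1$, yet every cyclically adjacent pair of positions carries either the same chord or two non-crossing chords; hence the induction on $|E|$ alone cannot proceed. The fix is easy: induct on $n$ (or on $n+|E|$). If no adjacent pair carries crossing chords, a nested-arc argument (chord at position $1$ has partner $b_1$; the non-crossing adjacency at $(1,2)$ forces the chord at position $2$ inside with partner $b_2<b_1$; iterate) produces a chord with adjacent endpoints, necessarily an isolated vertex of $\mathrm{Gr}(x)$; apply $\gamma_\mu\gamma_\mu=d$ to remove it, which drops $n$ by one and matches $V(G)=d\cdot V(G\setminus\{\text{isolated vertex}\})$. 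Otherwise an adjacent crossing exists and your Clifford step applies. With this repair the argument closes.
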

Specialising to $d = 4$, the $4$-dimensional trace operation $\mathrm{Tr}_4$ corresponds to evaluations of the Tutte polynomial at $(x,y) = (-1,-1)$. As it turns out, this is a special point in the Tutte plane, and corresponds to evaluations of the bicycle number \cite{ROSENSTIEHL1978195}:
\newtheorem*{mainthm4}{Theorem \ref{cor}}
\begin{mainthm4}[reworded]
Suppose the tuple $x = (x_1,x_2,\dots,x_{2n})$ contains the integers $1,\dots,n$ each repeated twice in some order. Then
\begin{equation}
    \mathrm{Tr}_4(\gamma_{\mu_{x_1}} \cdots \gamma_{\mu_{x_{2n}}}) = 
    4(-2)^{n+c(\mathrm{Gr}(x))+\mathrm{dim}(B(\mathrm{Gr}(x)))}
\end{equation}
where $c(\mathrm{Gr}(x))$ is the number of connected components of $\mathrm{Gr}(x)$, and $\mathrm{dim}(B(\mathrm{Gr}(x)))$ is the dimension of the bicycle space of $\mathrm{Gr}(x)$. 
\end{mainthm4}

Definitions and details of the proofs of these theorems are presented in \Cref{sec:proof}. As an example application, \Cref{eq:mf} can be used to compute the $d$-dimensional Dirac traces required for the two-loop QED diagram shown in \Cref{eq:examplecalc}:
\begin{align}\label{eq:extr1}
g_{\mu_1 \mu_2} g_{\mu_3 \mu_4} \mathrm{Tr}_d\left(\gamma_{\nu} \gamma_{\mu_1} \gamma_{\mu_5} \gamma_{\mu_2} \gamma_{\nu} \gamma_{\mu_3} \gamma_{\mu_5} \gamma_{\mu_4} \right) = \mathrm{Tr}_d\left(
\wick{\c2 \gamma_{\alpha_1} 
\c1 \gamma_{\alpha_2} 
\c3 \gamma_{\alpha_3}
\c1 \gamma_{\alpha_2}
\c2 \gamma_{\alpha_1}
\c1 \gamma_{\alpha_4}
\c3 \gamma_{\alpha_3}
\c1 \gamma_{\alpha_4}}
\right) \nonumber \\
= \mathrm{Tr}_d \left( \begin{tikzpicture}[scale=0.75,baseline=(a.base)]
\draw (0,0) circle (0.75);
\node (a) at (-0.75,0) {$\bullet$};
\node (b) at (-0.53,0.53) {$\bullet$};
\node (c) at (0,0.75) {$\bullet$};
\node (d) at (0.53,0.53) {$\bullet$};
\node (e) at (0.75,0) {$\bullet$};
\node (f) at (0.53,-0.53) {$\bullet$};
\node (g) at (0,-0.75) {$\bullet$};
\node (h) at (-0.53,-0.53) {$\bullet$};
\draw (a.center) -- (e.center);
\draw (b.center) -- (d.center);
\draw (c.center) -- (g.center);
\draw (f.center) -- (h.center);
\node at (-1.1,0) {$1$};
\node at (-0.8,0.8) {$2$};
\node at (0,1.1) {$3$};
\node at (0.8,0.8) {$2$};
\node at (1.1,0) {$1$};
\node at (0.8,-0.8) {$4$};
\node at (0,-1.1) {$3$};
\node at (-0.8,-0.8) {$4$};
\end{tikzpicture}\right) = 32 d \ T\left(
\begin{tikzpicture}[baseline=(base.base)]
    \begin{feynman}
      \vertex (a) at (0,0);
      \vertex (b) at (0.5,0);
      \vertex (c) at (1,0);
      \vertex (d) at (0.5,-0.5);
      \vertex (base) at (0.5,-0.25);
      \diagram*{
        (a) -- (d) -- (b);
        (c) --(d);
      };
      \node at (a) {$\bullet$};
      \node at (b) {$\bullet$};
      \node at (c) {$\bullet$};
      \node at (d) {$\bullet$};
      \node at (0,0.3) {$1$};
      \node at (0.5,0.3) {$2$};
      \node at (1,0.3) {$4$};
      \node at (0.8,-0.5) {$3$};
    \end{feynman}
\end{tikzpicture}; 1 - \tfrac{d}{2},-1 \right) = -4 d^4+24 d^3-48 d^2+32 d
\end{align}
\begin{align}\label{eq:extr2}
g_{\mu_1 \mu_3} g_{\mu_2 \mu_4}  \mathrm{Tr}_d\left(\gamma_{\nu} \gamma_{\mu_1} \gamma_{\mu_5} \gamma_{\mu_2} \gamma_{\nu} \gamma_{\mu_3} \gamma_{\mu_5} \gamma_{\mu_4} \right) = \mathrm{Tr}_d\left(
\wick{\c1 \gamma_{\alpha_1} 
\c2 \gamma_{\alpha_2} 
\c3 \gamma_{\alpha_3}
\c4 \gamma_{\alpha_4}
\c1 \gamma_{\alpha_1}
\c2 \gamma_{\alpha_2}
\c3 \gamma_{\alpha_3}
\c4 \gamma_{\alpha_4}}
\right) \nonumber \\
= \mathrm{Tr}_d \left( \begin{tikzpicture}[scale=0.75,baseline=(a.base)]
\draw (0,0) circle (0.75);
\node (a) at (-0.75,0) {$\bullet$};
\node (b) at (-0.53,0.53) {$\bullet$};
\node (c) at (0,0.75) {$\bullet$};
\node (d) at (0.53,0.53) {$\bullet$};
\node (e) at (0.75,0) {$\bullet$};
\node (f) at (0.53,-0.53) {$\bullet$};
\node (g) at (0,-0.75) {$\bullet$};
\node (h) at (-0.53,-0.53) {$\bullet$};
\draw (a.center) -- (e.center);
\draw (b.center) -- (f.center);
\draw (c.center) -- (g.center);
\draw (d.center) -- (h.center);
\node at (-1.1,0) {$1$};
\node at (-0.8,0.8) {$2$};
\node at (0,1.1) {$3$};
\node at (0.8,0.8) {$4$};
\node at (1.1,0) {$1$};
\node at (0.8,-0.8) {$2$};
\node at (0,-1.1) {$3$};
\node at (-0.8,-0.8) {$4$};
\end{tikzpicture}\right) = - 32d \ T\left(
\begin{tikzpicture}[baseline=(base.base)]
    \begin{feynman}
      \vertex (a) at (0.25,0.25);
      \vertex (b) at (0.25,-0.25);
      \vertex (c) at (-0.25,0.25);
      \vertex (d) at (-0.25,-0.25);
      \vertex (base) at (0,-0.1);
      \diagram*{
        (a) -- (b) -- (c) -- (d) -- (a);
        (a) -- (c);
        (b) -- (d);
      };
      \node at (a) {$\bullet$};
      \node at (b) {$\bullet$};
      \node at (c) {$\bullet$};
      \node at (d) {$\bullet$};
      \node at (0.45,0.45) {$2$};
      \node at (0.45,-0.45) {$4$};
      \node at (-0.45,0.45) {$1$};
      \node at (-0.45,-0.45) {$3$};
    \end{feynman}
\end{tikzpicture}; 1 - \tfrac{d}{2},-1 \right) = 4 d^4-48 d^3+112 d^2-64 d
\end{align}
The third possible contraction $g_{\mu_1 \mu_4} g_{\mu_2 \mu_3}$ gives the same graph up to isomorphism as in \Cref{eq:extr1}. 
\Cref{eq:extr1,eq:extr2} can be verified by various computer-algebra codes available that perform Dirac Traces~\cite{Jamin:1991dp,Shtabovenko:2023idz,Kublbeck:1990xc,Cyrol:2016zqb,Kuipers:2012rf}. 
In \Cref{sec:trad}, the formulas traditionally used in these computer-algebra codes to simplify traces of Dirac matrices are re-interpreted as relations on Tutte polynomials of graphs. 
Furthermore, the relationship to Tutte polynomials also provides \textit{new} relationships for Dirac traces, which can provide computational benefits when calculating long traces. 
Comparisons between different code packages capable of performing $d$-dimensional Dirac traces in Mathematica~\cite{Mathematica} and computing the Dirac trace with the Tutte polynomial are shown in \Cref{fig:comp}, where traces of strings of Dirac matrices contracted randomly. 
Though these tests were all performed on a single-core machine, 
the observed performance improvement by using specialised Tutte polynomial algorithms suggests that rewriting Dirac traces as Tutte polynomials may be computationally useful. 

\begin{figure}[t]
    \centering
    \hspace{-1.25cm}\includegraphics[width=0.65\linewidth]{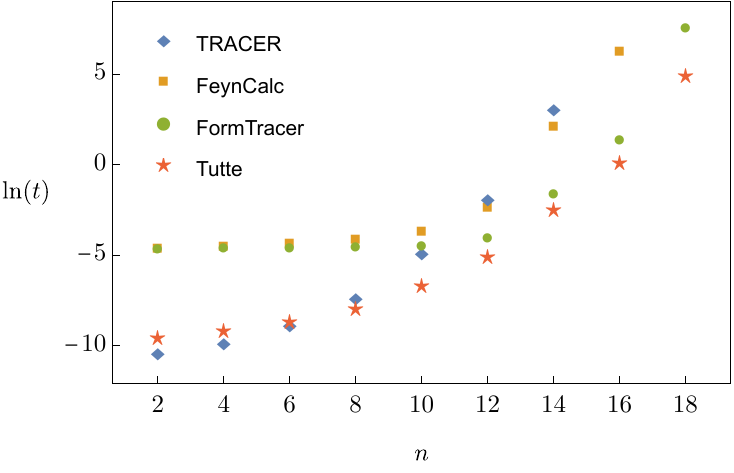}
    \begin{textblock*}{5cm}(6.85cm,3.45cm) 
    {\cite{Jamin:1991dp}}
    \end{textblock*}
    \begin{textblock*}{5cm}(7.2cm,4.05cm) 
    {\cite{Shtabovenko:2023idz,Kublbeck:1990xc}}
    \end{textblock*}
    \begin{textblock*}{5cm}(7.45cm,4.58cm) 
    {\cite{Cyrol:2016zqb,Kuipers:2012rf}}
    \end{textblock*}
    \caption{Average runtime of computing $d$-dimensional Dirac Traces of $2n$ $\gamma$-matrices contracted randomly. Runtime in seconds is plotted on a log-scale using various code packages available in Mathematica \cite{Mathematica} averaged over $128$ different examples for each $n$.
    The red star datapoints labelled `Tutte' refers to a naive implementation of \Cref{eq:mf} with Mathematica's inbuilt TuttePolynomial function. }
    \label{fig:comp}
\end{figure}

The complexity of computing Dirac traces is explored via the connection to Tutte polynomials in \Cref{sec:comp}. 
The problem of computing $\mathrm{Tr}_d$ on strings of $2n$ Dirac matrices contracted in some arbitrary fasion is equivalent to the problem of computing the Tutte polynomial of circle graphs with $n$ vertices along the line $y = -1$ in the Tutte plane (where a circle graph is a graph obtained by drawing straight chords on a common circle, with a vertex for each chord and an edge between chords that intersect). 
In the more general case of evaluating the Tutte polynomial for arbitrary graphs $G$, it is known for example that evaluating $T(G;1-n,0)$ for some $n \in \mathbb{Z}, n \geq 4$ is $\#\P$-complete by parsimonious reduction from $n$-colouring to 3SAT \cite{Jaeger_Vertigan_Welsh_1990}. By utilising some general interpolation theorems proven in Ref~\cite{Jaeger_Vertigan_Welsh_1990} a corollary is that evaluating $T(G;x,y)$ along the line $\{y=-1\}$ is also $\#\P$-hard. 
It might be hoped that a similar strategy would apply to the restricted class of circle graphs, and that Dirac traces are $\#\P$-hard to compute. 
In the case where $G$ is restricted to circle graphs, it is only known that evaluating $T(G;1-n,0)$ for $n \in \mathbb{Z}, n \geq 4$ is $\NP$-hard \cite{Unger} (due to lack of a parsimonious reduction), and we prove the statement:
\newtheorem*{mainthm2}{Theorem \ref{thm:cstar}}
\begin{mainthm2}[reworded]
Let the class $\mathbf{Circle}^*$ be the class of circle graphs and all $k$-stretchings of circle graphs, where a $k$-stretched graph (for $k \in \mathbb{N}, k \geq 2$) is obtained by subdividing each of the edges into $k$-segments. Computing $\{T(G;x,-1) : x \in \mathbb{R}\}$ (expressed as a polynomial in $x$) on $\mathbf{Circle}^*$ is $\NP$-hard. 
\end{mainthm2}
This falls short of the goal of proving that dimensionally regulated Dirac traces are $\#\P$-hard to compute, but the strategy may still be salvageable in future investigations. For the special case of evaluating the $4$-dimensional trace $\mathrm{Tr}_4$ however, this corresponds to evaluating $T(G;-1,-1)$, which happens to be an exceptional point in the Tutte plane, as there is a polynomial-time algorithm that solves this problem:
\newtheorem*{mainthm3}{Theorem \ref{thm:4P}}
\begin{mainthm3}[reworded]
The bicycle number $\mathrm{dim}(B(G))$ can be computed in polynomial time by gaussian elimination, and the $4$-dimensional trace is in $\FP$. 
\end{mainthm3}
where $\FP$ is the class of functions computable in polynomial time. 
Note that fixed-order perturbative QFT calculations will only require the Dirac trace to some finite power of $(d - 4)$, in other words, the taylor-series coefficients of the Tutte polynomial about the point $(-1,-1)$ along the line $y = -1$. 
These taylor-series coefficients might have graph theoretic interpretations that allow for polynomial-time evaluations, but this is not explored in this work. 
Details, definitions and proofs of \Cref{thm:cstar,thm:4P} are located in \Cref{sec:comp}.
Finally, it is well understood that Dirac traces with open (uncontracted) indices can be reduced to Dirac traces with all indices contracted. 
This allows for products of Dirac traces to be computed, as well as traces including 't Hooft Veltman $\gamma_5$. As these extensions mostly rely on connection to Tutte polynomials for single traces and other formulas known in the literature, a brief perspective on this topic is presented in \Cref{sec:4}.

\section{Mapping between Dirac Trace and Tutte Polynomial}\label{sec:proof}

In this section, the formal connection between the dimensionally regulated Dirac trace and the Tutte polynomial is proven. Before the main theorems and proofs are presented, some groundwork is necessary to introduce key ideas and fix notation. We start by formalising the dimensionally regulated Dirac Trace. One approach to constructing the $d$-dimensional trace $\mathrm{Tr}_d$ is by exhibiting explicit infinite-dimensional representations of the $d$-dimensional Clifford Algebra, as discussed in Ref~\cite{Collins:1984xc}. In this section, a formal construction of $\mathrm{Tr}_d$ will be presented instead by considering the defining relations 
\begin{equation}\label{eq:gdefs}
\{\gamma_\mu, \gamma_\nu\} := \gamma_\mu \gamma_\nu + \gamma_\nu \gamma_\mu = g_{\mu \nu}\mathbbm{1}, \qquad g_{\mu \mu} = d
\end{equation}
as relations on symbols, with no reference made to matrix-like structures for the $\gamma$-objects. For the purposes of this work, no distinction will be made between lower and upper indices (all indices will be lower indices) as everything will be considered in Euclidean signature. 

For concreteness, first consider the case of computing a single Dirac trace where there are no free indices. Extensions to traces with open (uncontracted) Dirac indices, and Dirac traces including $\gamma_5$ insertions are presented in \Cref{sec:4}. Let $\Sigma$ be a countable alphabet of symbols (usually labelled as $\Sigma = \{\mu_1,\mu_2,\dots\}$), and let $P(n)$ be the set of all permutations of tuples of length $2n$ with symbols from $\Sigma$, such that each symbol appears exactly twice. 
The map $\mathrm{Tr}_d$ will be defined as a map from $x = (x_1,\dots,x_{2n}) \in P(n)$ to $\mathrm{Tr}_d(x) \in \mathbb{Z}[d]$, the ring of polynomials in $d$ with integer coefficients, with the interpretation that $\mathrm{Tr}_d(x) = \mathrm{Tr}_d(\gamma_{x_1} \cdots \gamma_{x_{2n}})$. 

\begin{definition}\label{defn:Trd}
A function $\mathrm{Tr}_d : \cup_{n \in \mathbb{N}_{\geq 0}} P(n) \to \mathbb{Z}[d]$ is called a $d$-dimensional trace operation if it satisfies the base case $\mathrm{Tr}(\emptyset) = 4$ where $\emptyset \in P(0)$ is the empty tuple, and for all $n \geq 1$, $1 \leq i < 2n$ and $x = (x_1,\cdots,x_{2n}) \in P(n)$, it satisfies the recursion relations:
\begin{align}\label{eq:rec1}
x_i \neq x_{i+1} &\quad \implies \quad \mathrm{Tr}_d(x) = - \mathrm{Tr}_d(S_{i,i+1}(x)) + 2 \mathrm{Tr}_d(C_{i,i+1}(x)),\\ \label{eq:rec2}
x_i = x_{i+1} &\quad \implies \quad \mathrm{Tr}_d(x) = d \ \mathrm{Tr}_d(C_{i,i+1}(x)),
\end{align}
where $S_{i,j} : P(n) \to P(n)$ is the map that swaps the $i$-th element $x_i$ with the $j$-th element $x_j$ in the tuple, and $C_{i,j} : \{x \in P(n)\} \to P(n-1)$ is the map that removes the $i$-th and $j$-th elements in the tuple, and relabels all other occurrences of $x_j$ in the tuple to $x_i$. 
\end{definition}
\Cref{defn:Trd} is a formalisation of the defining relations given in \Cref{eq:gdefs}. The base case $\mathrm{Tr}_d(\emptyset) = 4$ is the conventional choice used in most calculations involving dimensional regularisation, but this condition can be changed, for instance to $\mathrm{Tr}_d(\emptyset) = 2^{\frac{d}{2}}$. This latter choice may seem more natural given that in even spacetime dimensions $d$, the standard representation of the Clifford Algebra has dimension $2^{\frac{d}{2}}$. However, such modifications to the base case do not affect the renormalization procedure in any meaningful way, and simply amount to a shift in the renormalization constants \cite{Collins:1984xc}. 

A graph $G = (V,E,\pi)$ refers to an undirected graph possibly with multi-edges and loops, and is specified by a vertex set $V$, an edge set $E$, and a map $\pi : E \to (V \times V) / \sim_f$ associating an edge with its endpoints, where for all $v_0,v_1 \in V$, $(v_0,v_1) \sim_f (v_1,v_0)$. The collection of all graphs is typeset as \textbf{Graph}. A circle graph is a graph formed by first drawing a collection of straight chords on a common circle, then associating each chord with a vertex of the graph, and an edge between two vertices if and only if the two corresponding chords intersect. The collection of all circle graphs is typeset as \textbf{Circle}.

\begin{definition}
The map 
\begin{equation}
    \mathrm{Gr} : \bigcup_{n \in \mathbb{N}_{\geq 0}} P(n) \to \mathbf{Circle}
\end{equation}
associates to every $x \in P(n)$ a circle graph. The construction first places the elements of the tuple $x$ in order around a circle, joining the repeated elements with straight chords. The circle graph is the intersection graph of this construction, so that the vertex set is the set of symbols appearing in the tuple $x$, and edges correspond to intersections of chords. 
\end{definition}

As an example:
\begin{center}
\begin{tikzpicture}
\node at (-4.5,0) {$P(4) \ni (\mu_1,\mu_2,\mu_3,\mu_2,\mu_1,\mu_4,\mu_3,\mu_4) \mapsto $};
\draw (0,0) circle (0.75);
\node (a) at (-0.75,0) {$\bullet$};
\node (b) at (-0.53,0.53) {$\bullet$};
\node (c) at (0,0.75) {$\bullet$};
\node (d) at (0.53,0.53) {$\bullet$};
\node (e) at (0.75,0) {$\bullet$};
\node (f) at (0.53,-0.53) {$\bullet$};
\node (g) at (0,-0.75) {$\bullet$};
\node (h) at (-0.53,-0.53) {$\bullet$};
\draw (a.center) -- (e.center);
\draw (b.center) -- (d.center);
\draw (c.center) -- (g.center);
\draw (f.center) -- (h.center);
\node at (-1.04,0) {$\mu_1$};
\node at (-0.74,0.74) {$\mu_2$};
\node at (0,1) {$\mu_3$};
\node at (0.74,0.74) {$\mu_2$};
\node at (1.04,0) {$\mu_1$};
\node at (0.74,-0.74) {$\mu_4$};
\node at (0,-1) {$\mu_3$};
\node at (-0.74,-0.74) {$\mu_4$};
\node at (1.5,0) {$\mapsto$};
\node (a) at (0+2  ,0+0.25) {$\bullet$};
\node (b) at (0.5+2,0+0.25) {$\bullet$};
\node (c) at (1+2  ,0+0.25) {$\bullet$};
\node (d) at (0.5+2,-0.5+0.25) {$\bullet$};
\draw (a.center) -- (d.center) -- (b.center);
\draw (c.center) --(d.center);
\node at (0+2,0.2+0.35) {$\mu_1$};
\node at (0.5+2,0.2+0.35) {$\mu_2$};
\node at (1+2,0.2+0.35) {$\mu_4$};
\node at (0.8+2,-0.5+0.25) {$\mu_3$};
\node at (4,0) {$\in \mathbf{Circle}$};
\end{tikzpicture}
\end{center}

Clearly this construction satisfies a cyclic symmetry, where the same graph is associated to all cyclic shifts of a tuple in $P(n)$. This observation will be related to the property that traces are invariant under cyclic shifts. Furthermore, the associated graph is invariant under reversing the order of the tuple, which is related to invariance of traces under transposes and inverses. 

In the general case, the Tutte polynomial is a polynomial defined for matroids (this will be relevant in \Cref{sec:comp}, where a brief introduction to matroids is presented). For the purposes of this section, it is simpler to restrict to the special case of undirected graphs. To define the Tutte polynomial, we first introduce some basic graph terminology: 
\begin{itemize}
\item[$\bullet$] $c(G)$ is the number of connected components of the graph. 
\item[$\bullet$] $b(G)$ is the number of bridges of the graph, where a bridge is an edge that if removed from $G$, the number of connected components increases by one. 
\item[$\bullet$] $l(G)$ is the number of loops of the graph $G$, where a loop is an edge that has both of it's endpoints as the same vertex. 
\item[$\bullet$] $G - e$ is the graph obtained by deleting the edge $e$ from the graph $G$
\item[$\bullet$] $G/e$ is the graph obtained by contracting the edge $e$ (deleting $e$, and merging it's two endpoints).
\end{itemize}

\begin{definition}{\textup{\cite{Tutte:1954_}}}
The Tutte polynomial is the unique function $T : \mathbf{Graph} \to \mathbb{Z}[x,y]$ satisfying the base case $T(G) = x^{|b(G)|}y^{|l(G)|}$ if $G$ has no edges that are not bridges or loops, and for any edge $e$ that is not a bridge or a loop, the deletion-contraction relation holds:
\begin{equation}
T(G) =  T(G - e) + T(G/ e).
\end{equation}
\end{definition}
The main observation of this work is that the anticommutation relations of the Dirac algebra can be reinterpreted as a deletion-contraction relation for the Tutte polynomial computed on the associated circle graph. To see this, it is simplest to make some technical detours. Firstly, the `recipe theorem' for the Tutte polynomial for graphs which was originally stated and proved in Ref~\cite{Oxley:1979_} is given below in the form presented in Ref~\cite{Bollobs1998}:
\begin{prop}{\textup{\cite[Thm. 1]{Oxley:1979_}}}\label{propox}
Suppose the function $\overline{T} : \mathbf{Graph} \to \mathbb{Z}[x,y,\alpha,\sigma,\tau]$ satisfies the base case $\overline{T}(G) = x^{|b(G)|} y^{|l(G)|} \alpha^{|V|}$ if $G$ has no edges that are not bridges or loops, and satisfies a modified recurrence relation
\begin{equation}
\overline{T}(G) = \sigma \overline{T}(G - e) + \tau \overline{T}(G/e).
\end{equation}
Then $\overline{T}$ is given by a transformation of variables of the Tutte polynomial:
\begin{equation}
\overline{T}(G;x,y,\alpha,\sigma,\tau) = \alpha^{c(G)} \sigma^{|E|-|V|+c(G)} \tau^{|V|-c(G)} T\left(G;\frac{\alpha x}{\tau},\frac{y}{\sigma}\right).
\end{equation}
\end{prop}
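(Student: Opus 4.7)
The plan is to proceed by strong induction on the number of edges of $G$ that are neither bridges nor loops (call these \emph{internal} edges), verifying that the proposed closed form
$$\widetilde{T}(G;x,y,\alpha,\sigma,\tau) := \alpha^{c(G)} \sigma^{|E|-|V|+c(G)} \tau^{|V|-c(G)}\, T\!\left(G;\tfrac{\alpha x}{\tau}, \tfrac{y}{\sigma}\right)$$
satisfies both the stipulated base case and the modified deletion-contraction recurrence. Since those two properties determine any function on $\mathbf{Graph}$ uniquely by induction on the number of internal edges, this will force $\overline{T}=\widetilde{T}$.

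First I would dispatch the base case. If $G$ has only bridges and loops, each connected component is a tree of bridges with some loops dangling off its vertices, which yields the identities $|b(G)| = |V|-c(G)$ and $|l(G)| = |E|-|V|+c(G)$. Feeding these into the exponents of $\alpha$, $\sigma$, $\tau$ in $\widetilde{T}(G)$, and using the ordinary Tutte base case $T(G;X,Y)=X^{|b(G)|}Y^{|l(G)|}$ at $(X,Y)=(\alpha x/\tau,\, y/\sigma)$, the powers of $\sigma$ and $\tau$ are designed to cancel precisely, leaving $\widetilde{T}(G) = x^{|b(G)|}y^{|l(G)|}\alpha^{|V|}$, which matches the hypothesised base case for $\overline{T}$.

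For the inductive step I would pick any internal edge $e$ and track how the three numerical invariants change: deleting a non-bridge preserves $c$ and $|V|$ while reducing $|E|$ by one, and contracting a non-loop preserves $c$, reduces $|V|$ by one, and reduces $|E|$ by one. A short bookkeeping check shows that the prefactor $\alpha^{c}\sigma^{|E|-|V|+c}\tau^{|V|-c}$ in $\sigma\,\widetilde{T}(G-e)$ and the prefactor in $\tau\,\widetilde{T}(G/e)$ both agree with the prefactor in $\widetilde{T}(G)$. Factoring this common prefactor out and applying the classical deletion-contraction identity of the Tutte polynomial at the shifted point $(\alpha x/\tau, y/\sigma)$ then yields $\widetilde{T}(G) = \sigma\,\widetilde{T}(G-e) + \tau\,\widetilde{T}(G/e)$, exactly matching the recurrence assumed for $\overline{T}$.

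The one conceptual subtlety is well-definedness of $\overline{T}$: a priori its value might depend on which internal edge $e$ one chooses to recurse on, so one could worry an extra consistency check is needed. The elegant way around this is that $\widetilde{T}$ itself, being built from the standard Tutte polynomial whose well-definedness is classical, is a bona fide function on $\mathbf{Graph}$ satisfying the stipulated recurrence, and so serves as an explicit witness that the recurrence is consistent for \emph{any} choice of internal edge. The induction can therefore proceed unambiguously. The main bookkeeping hazard I anticipate is tracking the exponent of $\tau$ through contraction (where $|V|$ drops by one but $c$ does not); once that is handled carefully, the matching of prefactors is routine and the argument closes.
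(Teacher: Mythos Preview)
Your argument is correct. Note that the paper does not actually supply a proof of this proposition: it is stated as a citation of Oxley--Welsh and used as a black box in the subsequent development. Your direct inductive verification---checking the base case via $|b(G)|=|V|-c(G)$ and $|l(G)|=|E|-|V|+c(G)$, then matching prefactors under deletion and contraction of an internal edge---is the standard proof, and the exponent bookkeeping you describe is accurate. The point you raise about well-definedness is also handled the right way: since $\widetilde{T}$ is explicitly defined via the classical Tutte polynomial and satisfies the recurrence for \emph{every} choice of internal edge, any $\overline{T}$ obeying the hypotheses is forced to coincide with it by induction on $|E|$.
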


In the following proofs, it is easiest to utilise an alternative parametrisation of the Tutte polynomial:

\begin{definition}\label{defn:gc}
For a graph $G$, let $f : V \to \{1,\cdots,n\}$ be an arbitrary function. Assigning an arbitrary direction to each edge of the graph, where $\mathrm{src},\mathrm{dest} : E \to V$  are functions associating an edge with its source vertex and its destination vertex, then the number of collisions is defined by
\begin{equation}
\mathrm{coll}(f) = \sum_{e \in E} \delta_{f(\mathrm{src}(e)),f(\mathrm{dest}(e))}
\end{equation}
where $\delta_{ab}$ is the kronecker-delta. The generalised chromatic-polynomial is a function $\chi : \mathbf{Graph} \to \mathbb{Z}[q,n]$ given by
\begin{equation}
\chi(G;q,n) = \sum_{f : V \to \{1,\cdots,n\}} q^{\mathrm{coll}(f)}
\end{equation}
\end{definition}

Note that $\chi(G;0,n)$ is the chromatic polynomial of the graph $G$, that counts the number of proper colourings of the graph $G$ with $n$-colours (where the endpoints of every edge must be coloured different colours). The generalised chromatic-polynomial satisfies a deletion-contraction relation, which allows us to prove that it can be written as an evaluation of the Tutte polynomial:

\begin{prop}\label{prop:b}
\begin{equation}\label{eq:16}
\chi (G;q,n) = \overline{T}\left(G;1 + \frac{q-1}{n},q,n,1,q-1\right) = n^{c(G)} (q-1)^{|V|-c(G)} T\left(G; 1 + \frac{n}{q-1}, q\right)
\end{equation}
\end{prop}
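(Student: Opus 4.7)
The plan is to verify that $\chi(G;q,n)$ satisfies both hypotheses of the recipe theorem (\Cref{propox}) for the parameter choice $\alpha=n$, $y=q$, $\sigma=1$, $\tau=q-1$, $x=1+(q-1)/n$, after which the identity in~\eqref{eq:16} reduces to direct substitution into the conclusion of \Cref{propox}.

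First I would establish the deletion-contraction recurrence $\chi(G;q,n) = \chi(G-e;q,n) + (q-1)\chi(G/e;q,n)$ for any non-loop, non-bridge edge $e$ with endpoints $u,v$. The argument is to split the sum in \Cref{defn:gc} according to whether $f(u)=f(v)$ or not. Since $\mathrm{coll}_G(f) = \mathrm{coll}_{G-e}(f) + \delta_{f(u),f(v)}$, colorings with $f(u)\neq f(v)$ contribute $q^{\mathrm{coll}_{G-e}(f)}$ and colorings with $f(u)=f(v)$ contribute $q \cdot q^{\mathrm{coll}_{G-e}(f)}$; summing gives $\chi(G) = \chi(G-e;q,n) + (q-1) \sum_{f : f(u)=f(v)} q^{\mathrm{coll}_{G-e}(f)}$. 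The key observation is that colorings of $V(G)$ with $f(u)=f(v)$ are in bijection with colorings of $V(G/e)$ via the identification of $u$ and $v$, a bijection that preserves the collision count on the remaining edges, so the trailing sum is exactly $\chi(G/e;q,n)$.

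Next I would verify the base case: when every edge of $G$ is a bridge or a loop, $\chi(G;q,n) = x^{b(G)} y^{l(G)} \alpha^{|V|}$ at the chosen parameter values. Each loop forces $\delta_{f(u),f(v)}=1$ and pulls out a fixed factor $q^{l(G)}$. The bridges form a spanning forest, so the remaining sum factors over connected components; for a tree component on $k$ vertices, summing over a leaf's color with its parent's color held fixed produces $q + (n-1)$, and by leaf-removal induction the component contributes $n(q+n-1)^{k-1}$. Taking the product over all $c(G)$ components yields $\chi(G;q,n) = q^{l(G)} n^{c(G)} (q+n-1)^{b(G)}$; using $|V|-b(G)=c(G)$ for such $G$, a direct rearrangement confirms this equals $(1+(q-1)/n)^{b(G)} q^{l(G)} n^{|V|}$ as required.

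With both hypotheses established, \Cref{propox} yields $\chi(G;q,n) = n^{c(G)} (q-1)^{|V|-c(G)} T(G; nx/\tau, y/\sigma)$, and substituting the chosen parameter values gives $nx/\tau = 1 + n/(q-1)$ and $y/\sigma = q$, which is exactly the second equality in~\eqref{eq:16}. The step that I expect to need the most care is the leaf-removal induction in the base case, together with the bookkeeping that combines the $\alpha^{|V|}=n^{|V|}$ factor with the tree sum via $|V|-b(G)=c(G)$. A minor technical subtlety is that the value $x=1+(q-1)/n$ carries $n$ in a denominator, but this is benign: $T(G;x,y)$ has degree at most $|V|-c(G)$ in $x$, so the prefactor $(q-1)^{|V|-c(G)}$ clears all denominators and the final expression lies in $\mathbb{Z}[q,n]$, matching the stated codomain of $\chi$.
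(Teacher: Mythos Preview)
Your proposal is correct and follows essentially the same approach as the paper: verify that $\chi(G;q,n)$ satisfies the deletion--contraction recurrence $\chi(G)=\chi(G-e)+(q-1)\chi(G/e)$ and the bridge/loop base case, then invoke \Cref{propox}. The only cosmetic difference is in the base-case verification: the paper handles bridges one edge at a time via the symmetry relation $\chi(G;q,n)=(1+(q-1)/n)\,\chi(G-e;q,n)$ (using that the two sides of a bridge are independent in $G-e$), whereas you compute $\chi$ on a bridge/loop-only graph directly by leaf-removal induction on each tree component; both arrive at $\chi(G;q,n)=q^{l(G)}n^{c(G)}(q+n-1)^{b(G)}$ and are interchangeable.
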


\begin{proof}
Note that if $G$ has no edges, then $\chi(G;q,n) = n^{|V|}$ as there can be no collisions. If $e$ is a bridge, then $\chi(G;q,n)$ is composed of a contribution corresponding to the endpoints of $e$ being colored the same $\left(\frac{q}{n} \right) \chi(G - e; q,n)$, and a contribution corresponding to the endpoints of $e$ being colored differently $\left(\frac{n-1}{n}\right) \chi(G - e; q,n)$, giving a relationship $\chi(G;q,n) = \left(1 + \frac{q-1}{n}\right) \chi(G - e; q,n)$. If $e$ is a loop, then $\chi(G;q,n) = q \chi(G-e;q,n)$ as the loop always gives a collision. Finally, for any other edge $e$ there is a deletion-contraction relation, described pictorially in an example:
\begin{equation}\label{eq:17}
\begin{tikzpicture}[baseline=-20pt]
\node (za) at (-0.7,-0.6) {$\chi\bigg($};
\node (zb) at (2.4,-0.6) {$;q,n\bigg) = \chi \bigg($};
\node (zc) at (6.5,-0.6) {$;q,n\bigg) + (q - 1) \ \chi \bigg($};
\node (zd) at (10.25,-0.6) {$;q,n \bigg)$};
\node (a) at (0,0) {$\bullet$};
\node (b) at (1,0) {$\bullet$};
\node (c) at (0.5,-1.2) {}; 
\node (aa) at (-0.2,-1.03) {};
\node (ab) at (0.1,-0.9) {};
\node (ac) at (0.4,-0.83) {};
\node (ba) at (0.9,-0.9) {};
\node (bb) at (1.2,-1.03) {};
\node at (0.5,-0.2) {$e$};

\draw (a.center) to (b.center);
\draw (a.center) to [bend left = 50] (b.center);
\draw (a.center) to (aa.center);
\draw (a.center) to (ab.center);
\draw (a.center) to (ac.center);
\draw (b.center) to (ba.center);
\draw (b.center) to (bb.center);

\coordinate (s) at (3.7,0);

\node (sa) at ($(a.center)+(s.center)$) {$\bullet$};
\node (sb) at ($(b.center)+(s.center)$) {$\bullet$};
\draw ($(a.center)+(s.center)$) to [bend left = 50] ($(b.center)+(s.center)$);
\draw ($(a.center)+(s.center)$) to ($(aa.center)+(s.center)$);
\draw ($(a.center)+(s.center)$) to ($(ab.center)+(s.center)$);
\draw ($(a.center)+(s.center)$) to ($(ac.center)+(s.center)$);
\draw ($(b.center)+(s.center)$) to ($(ba.center)+(s.center)$);
\draw ($(b.center)+(s.center)$) to ($(bb.center)+(s.center)$);

\coordinate (t) at (8.45,0);
\node (ta) at ($(a.center)+(t.center)+(0.5,0)$) {$\bullet$};
\draw (ta.center) to [out=45,in=135,looseness=200] ($(ta.center)+(0.01,0)$);
\draw (ta.center) to ($(aa.center)+(t.center)$);
\draw (ta.center) to ($(ab.center)+(t.center)$);
\draw (ta.center) to ($(ac.center)+(t.center)$);
\draw (ta.center) to ($(ba.center)+(t.center)$);
\draw (ta.center) to ($(bb.center)+(t.center)$);

\draw [fill=gray!20] (c) ellipse (0.8cm and 0.4cm);
\draw [fill=gray!20] ($(c.center)+(s.center)$) ellipse (0.8cm and 0.4cm);
\draw [fill=gray!20] ($(c.center)+(t.center)$) ellipse (0.8cm and 0.4cm);
\end{tikzpicture}
\end{equation}
where the grey blob represents the rest of the unshown graph that has arbitrary structure. The first term on the right hand side of \Cref{eq:17} correctly counts the contributions from the sum where the two vertices connected by $e$ are coloured differently, and the second term corrects for the behaviour when the two vertices are coloured the same (in order to generate the factor of $q$ due to the collision). \Cref{eq:16} follows from the above observations and an application of \Cref{propox}. 
\end{proof}

\begin{prop}\label{prop:a}
There is a unique function $\mathrm{Tr}_d$ satisfying the postulates outlined in \Cref{defn:Trd}, and it is given by:
\begin{equation}\label{eq:step1}
\mathrm{Tr}_d(x) = 4 (-1)^{|E|}  \ \chi(\mathrm{Gr}(x);-1,d)
\end{equation}
\end{prop}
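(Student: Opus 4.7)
The proof splits into existence (the formula satisfies the recursions of \Cref{defn:Trd}) and uniqueness (the recursions determine a unique function). The base case is immediate: $\mathrm{Gr}(\emptyset)$ is the empty graph with $|E| = 0$ and $\chi = 1$, giving $4 \cdot (-1)^0 \cdot 1 = 4$.

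For existence I would verify the two recursions in turn. Recursion \Cref{eq:rec2} is straightforward: when $x_i = x_{i+1}$, the two adjacent occurrences form a chord that cannot cross any other chord, so the corresponding vertex is isolated in $\mathrm{Gr}(x)$; removing it leaves $|E|$ unchanged and multiplies $\chi(G;q,n)$ by $n$ (free choice of color, no new collisions), which at $q=-1$, $n=d$ produces the required factor of $d$. Recursion \Cref{eq:rec1} is the main content. Writing $A, B$ for the vertices associated to $x_i, x_{i+1}$, the key graph-theoretic observations are: (i) $\mathrm{Gr}(S_{i,i+1}(x))$ differs from $\mathrm{Gr}(x)$ only by \emph{toggling} the edge $A$--$B$, because every other chord has both endpoints away from positions $i, i+1$ and its intersection with the chords of $A$ or $B$ depends only on cyclic order; (ii) in $\mathrm{Gr}(C_{i,i+1}(x))$, the vertex $A'$ obtained by merging $A$ and $B$ is adjacent to each other vertex $C$ precisely when \emph{exactly one} of $A, B$ was adjacent to $C$ in $\mathrm{Gr}(x)$ -- a \emph{symmetric difference}/XOR rule. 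Tracking the sign $(-1)^{|E|}$ via (i) and (ii), the recursion reduces to the identity
\begin{equation*}
\chi(\mathrm{Gr}(x); -1, d) - \chi(\mathrm{Gr}(S_{i,i+1}(x)); -1, d) = 2(-1)^{\epsilon}\, \chi(\mathrm{Gr}(C_{i,i+1}(x)); -1, d),
\end{equation*}
where $\epsilon = 1$ if $A \sim B$ in $\mathrm{Gr}(x)$ and $0$ otherwise. I would prove this by partitioning the sum defining $\chi$ according to whether $f(A) = f(B)$: such colorings are in bijection with colorings of the merged graph (set $g(A') := f(A) = f(B)$), and the corresponding collision counts match modulo $2$ because the ordinary sum $a_C + b_C$ and the XOR $a_C \oplus b_C$ of indicator variables always share parity.

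For uniqueness I would induct on $n$. If some $x_i = x_{i+1}$, recursion \Cref{eq:rec2} immediately produces a length-$2(n-1)$ tuple, handled by induction. Otherwise, apply \Cref{eq:rec1} at position $1$ to obtain $\mathrm{Tr}_d(x) = -\mathrm{Tr}_d(S_{1,2}(x)) + 2\, \mathrm{Tr}_d(C_{1,2}(x))$; the $C$-term has length $2(n-1)$, and in the $S$-term the first occurrence of $x_1$ has moved one step rightward toward its second occurrence. Iterating this ``push $x_1$'' strategy, after finitely many applications of \Cref{eq:rec1} the two occurrences of $x_1$ become adjacent and \Cref{eq:rec2} applies, yielding a $\mathbb{Z}[d]$-linear combination of length-$2(n-1)$ traces, all determined by induction. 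The hard step is observation (ii): that merging adjacent chords yields new adjacencies by symmetric difference rather than by ordinary graph contraction. This is what forces the evaluation to occur at $q = -1$ -- precisely the point where the parity discrepancy between $a_C + b_C$ and $a_C \oplus b_C$ cancels and the recursion closes on the generalized chromatic polynomial.
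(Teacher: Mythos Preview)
Your argument is correct and follows essentially the same route as the paper: both split the colouring sum by whether $f(A)=f(B)$, observe that the $f(A)\neq f(B)$ contributions cancel between $x$ and $S_{i,i+1}(x)$, and put the surviving colourings in bijection with colourings of the merged graph --- your explicit ``XOR adjacency'' rule and the parity identity $a_C+b_C\equiv a_C\oplus b_C\pmod 2$ are precisely the content of the paper's region argument in \Cref{eq:circs}. One small point to make explicit: your use of the sum-over-colourings formula for $\chi$ only establishes the recursion for positive integer $d$, and you should add the Lagrange-interpolation step (as the paper does) to conclude the polynomial identity in $\mathbb{Z}[d]$.
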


\begin{proof}
For positive integer values of $d$, the right hand side of \Cref{eq:step1} (RHS) can be expanded using \Cref{defn:gc} as
\begin{align}
\mathrm{RHS}(\mathrm{Gr}(x)) &= 4 (-1)^{|E|} \sum_{f : V(\mathrm{Gr}(x)) \to \{1,\dots,d\}} (-1)^{\mathrm{coll}(f)} \\
&= 4 \sum_{f : V(\mathrm{Gr}(x)) \to \{1,\dots,d\}} \prod_{e \in E} (-1)^{1 - \delta_{f(\mathrm{src}(e)),f(\mathrm{dest}(e))}}
\end{align}
We will now proceed to show that $\mathrm{RHS}$ satisfies the defining postulates of $\mathrm{Tr}_d$. Since $\mathrm{Gr}(\emptyset)$ is the empty graph, $\mathrm{RHS}(\mathrm{Gr}(x)) = 4$, thus the base condition is satisfied. Now suppose that $x = (x_1,\dots,x_{2n}), 1 \leq i < 2n$, and $x_i = x_{i+1}$. The vertex corresponding to $(x_i, x_{i+1})$ in $\mathrm{Gr}(x)$ is an isolated vertex, and $\mathrm{RHS}(\mathrm{Gr}(x)) = d\cdot \mathrm{RHS}(\mathrm{Gr}(C_{i,i+1}(x)))$ as the isolated vertex can be coloured in $d$ ways and there are no associated collisions. Hence the condition corresponding to $\gamma_\mu \gamma_\mu = d \cdot \mathbbm{1}$ is satisfied. 

\newcommand*\circled[1]{\tikz[baseline=(char.base)]{
\node[shape=circle,draw,color=red,fill=white,inner sep=1pt] (char) {#1};}}

Suppose instead that $x_i \neq x_{i+1}$. Then $x_i$ and $x_{i+1}$ correspond to two different vertices in $\mathrm{Gr}(x)$, which are either connected by a single edge, or not connected by any edges. 
Suppose $j,k$ are chosen such that $i,i+1,j,k$ are four different integers. 
Then, the following equation holds regardless of the structure of the chords connecting the regions $\circled{1},\circled{2}$ and $\circled{3}$ (not shown below):

\begin{equation}\label{eq:circs}
\begin{tikzpicture}[baseline=0pt]
\draw [black,thick,domain=60:120] plot ({cos(\x)}, {sin(\x)});
\draw [black,thick,domain=180:240] plot ({cos(\x)}, {sin(\x)});
\draw [black,thick,domain=-60:0] plot ({cos(\x)}, {sin(\x)});
\draw [red,thick,domain=120:180] plot ({cos(\x)}, {sin(\x)});
\draw [red,thick,domain=240:300] plot ({cos(\x)}, {sin(\x)});
\draw [red,thick,domain=0:60] plot ({cos(\x)}, {sin(\x)});
\node (aa) at ({cos(80)}, {sin(80)}) {$\bullet$};
\node (ab) at ({cos(100)}, {sin(100)}) {$\bullet$};
\node (ac) at ({cos(210)}, {sin(210)}) {$\bullet$};
\node (ad) at ({cos(-30)}, {sin(-30)}) {$\bullet$};
\draw (aa.center) to (ac.center);
\draw (ab.center) to (ad.center);
\node (ae) at ( 0.5, 1.25) {$x_{i+1}$};
\node (af) at (-0.5, 1.25) {$x_i$};
\node (ag) at (1.15,-0.75) {$x_j$};
\node (ah) at (-1.15,-0.75) {$x_k$};
\node (ai) at (-1.9,0) {$\mathrm{RHS} \bigg($};
\node (aj) at (2.25,0) {$\bigg) + \mathrm{RHS} \bigg($};
\node  at ({cos(150)},{sin(150)}) {\circled{1}};
\node  at ({cos(270)},{sin(270)}) {\circled{3}};
\node  at ({cos(30)},{sin(30)}) {\circled{2}};
\draw [red,thick] ({0.9*cos(0)},{0.9*sin(0)}) to ({1.1*cos(0)},{1.1*sin(0)});
\draw [red,thick] ({0.9*cos(60)},{0.9*sin(60)}) to ({1.1*cos(60)},{1.1*sin(60)});
\draw [red,thick] ({0.9*cos(120)},{0.9*sin(120)}) to ({1.1*cos(120)},{1.1*sin(120)});
\draw [red,thick] ({0.9*cos(180)},{0.9*sin(180)}) to ({1.1*cos(180)},{1.1*sin(180)});
\draw [red,thick] ({0.9*cos(240)},{0.9*sin(240)}) to ({1.1*cos(240)},{1.1*sin(240)});
\draw [red,thick] ({0.9*cos(300)},{0.9*sin(300)}) to ({1.1*cos(300)},{1.1*sin(300)});
\node at (-0.55,0.2) {$a$};
\node at (0.55,0.2) {$b$};

\coordinate (s) at (4.5,0);
\draw [black,thick,domain=60:120] plot  ({4.5+cos(\x)}, {sin(\x)});
\draw [black,thick,domain=180:240] plot ({4.5+cos(\x)}, {sin(\x)});
\draw [black,thick,domain=-60:0] plot   ({4.5+cos(\x)}, {sin(\x)});
\draw [red,thick,domain=120:180] plot   ({4.5+cos(\x)}, {sin(\x)});
\draw [red,thick,domain=240:300] plot   ({4.5+cos(\x)}, {sin(\x)});
\draw [red,thick,domain=0:60] plot      ({4.5+cos(\x)}, {sin(\x)});
\node (ba) at ($({cos(80)}, {sin(80)})+(s)$) {$\bullet$};
\node (bb) at ($({cos(100)}, {sin(100)})+(s)$) {$\bullet$};
\node (bc) at ($({cos(210)}, {sin(210)})+(s)$) {$\bullet$};
\node (bd) at ($({cos(-30)}, {sin(-30)})+(s)$) {$\bullet$};
\draw (ba.center) to (bd.center);
\draw (bb.center) to (bc.center);
\node (be) at ($( 0.5, 1.25)+(s)$) {$x_{i+1}$};
\node (bf) at ($(-0.5, 1.25)+(s)$) {$x_i$};
\node (bg) at ($(1.15,-0.75)+(s)$) {$x_j$};
\node (bh) at ($(-1.15,-0.75)+(s)$) {$x_k$};
\node (bi) at ($(2.25,0) + (s.center)$) {$\bigg) = 2 \mathrm{RHS} \bigg($};
\node  at ({4.5+cos(150)},{sin(150)}) {\circled{1}};
\node  at ({4.5+cos(270)},{sin(270)}) {\circled{3}};
\node  at ({4.5+cos(30)},{sin(30)}) {\circled{2}};
\draw [red,thick] ({4.5+0.9*cos(0)},{0.9*sin(0)}) to ({4.5+1.1*cos(0)},{1.1*sin(0)});
\draw [red,thick] ({4.5+0.9*cos(60)},{0.9*sin(60)}) to ({4.5+1.1*cos(60)},{1.1*sin(60)});
\draw [red,thick] ({4.5+0.9*cos(120)},{0.9*sin(120)}) to ({4.5+1.1*cos(120)},{1.1*sin(120)});
\draw [red,thick] ({4.5+0.9*cos(180)},{0.9*sin(180)}) to ({4.5+1.1*cos(180)},{1.1*sin(180)});
\draw [red,thick] ({4.5+0.9*cos(240)},{0.9*sin(240)}) to ({4.5+1.1*cos(240)},{1.1*sin(240)});
\draw [red,thick] ({4.5+0.9*cos(300)},{0.9*sin(300)}) to ({4.5+1.1*cos(300)},{1.1*sin(300)});
\node at (4.1,0.2) {$a$};
\node at (4.9,0.2) {$b$};

\coordinate (t) at (9,0);
\draw [black,thick,domain=60:120] plot  ({9+cos(\x)}, {sin(\x)});
\draw [black,thick,domain=180:240] plot ({9+cos(\x)}, {sin(\x)});
\draw [black,thick,domain=-60:0] plot   ({9+cos(\x)}, {sin(\x)});
\draw [red,thick,domain=120:180] plot   ({9+cos(\x)}, {sin(\x)});
\draw [red,thick,domain=240:300] plot   ({9+cos(\x)}, {sin(\x)});
\draw [red,thick,domain=0:60] plot      ({9+cos(\x)}, {sin(\x)});
\node (cc) at ($({cos(210)}, {sin(210)})+(t)$) {$\bullet$};
\node (cd) at ($({cos(-30)}, {sin(-30)})+(t)$) {$\bullet$};
\draw (cc.center) to (cd.center);
\node (cg) at ($(1.15,-0.75)+(t)$) {$x_j$};
\node (ch) at ($(-1.15,-0.75)+(t)$) {$x_k$};
\node (ci) at ($(1.5,0) + (t.center)$) {$\bigg)$};
\node  at ({9+cos(150)},{sin(150)}) {\circled{1}};
\node  at ({9+cos(270)},{sin(270)}) {\circled{3}};
\node  at ({9+cos(30)},{sin(30)}) {\circled{2}};
\draw [red,thick] ({9+0.9*cos(0)},{0.9*sin(0)}) to ({9+1.1*cos(0)},{1.1*sin(0)});
\draw [red,thick] ({9+0.9*cos(60)},{0.9*sin(60)}) to ({9+1.1*cos(60)},{1.1*sin(60)});
\draw [red,thick] ({9+0.9*cos(120)},{0.9*sin(120)}) to ({9+1.1*cos(120)},{1.1*sin(120)});
\draw [red,thick] ({9+0.9*cos(180)},{0.9*sin(180)}) to ({9+1.1*cos(180)},{1.1*sin(180)});
\draw [red,thick] ({9+0.9*cos(240)},{0.9*sin(240)}) to ({9+1.1*cos(240)},{1.1*sin(240)});
\draw [red,thick] ({9+0.9*cos(300)},{0.9*sin(300)}) to ({9+1.1*cos(300)},{1.1*sin(300)});
\node at (9,-0.3) {$c$};

\end{tikzpicture}
\end{equation}
where it is understood that $\mathrm{RHS}$ is acting on the corresponding intersection graphs of the circle graphs depicted in \Cref{eq:circs}. 
To prove \Cref{eq:circs}, first notice that for a particular coloring of the chords, if chord $a$ is colored differently to chord $b$, then the contribution from that coloring vanishes on the left hand side of \Cref{eq:circs} as the first circle diagram has a relative factor of $(-1)$ compared to the second circle diagram due to the color collision. So, the only colorings that contribute on the left hand side are colorings in which both chords $a$ and $b$ are colored the same, which are in bijective correspondence with colorings of the chords on the right hand side diagram, where chord $c$ is coloured the same as the chords $a$ and $b$. Note that any chords that connect region $\circled{1}$ to $\circled{3}$, or region $\circled{2}$ to $\circled{3}$ will cross exactly one of either chord $a$ or chord $b$ for the left hand diagrams, and will cross chord $c$ on the right hand diagram. Thus, any such chord will contribute the same factor to the sum on either side of \Cref{eq:circs}. And if a chord connects region $\circled{1}$ to $\circled{2}$, it crosses both chords $a$ and $b$ on the left hand side, but since $a$ and $b$ are colored the same, it is equivalent to not crossing either (as it picks up two minus signs if it is colored differently to $a$ and $b$), as in the right diagram. Hence \Cref{eq:circs} is true, proving that RHS satisfies the recurrence relation corresponding to $\{\gamma_\mu, \gamma_\nu\} = 2 g_{\mu \nu}$, for positive integer values of $d$. 

The observations above show that RHS satisfies the recurrence relations defining $\mathrm{Tr}_d$ for all positive integer values of $d$. Since RHS are finite degree polynomials in $d$, by lagrange interpolation in fact RHS satisfies the recurrence relations for all values of $d$. It is the unique valid definition, as the value of $\mathrm{Tr}_d$ on any particular value of $x$ is uniquely determined by the recurrence relations, which can always simplify $\mathrm{Gr}(x)$ to graphs with no edges.  

\end{proof}

The proof above is formalising the intuition that $d$-dimensional traces of Dirac matrices behave as if there were $d$-different anticommuting matrices for positive integer values of $d$. With this proposition, the main theorem follows:

\begin{theorem}\label{mainthm}
The unique map $\mathrm{Tr}_d$ satisfying the defining properties in \Cref{defn:Trd} is given by:
\begin{equation}\label{eq:mainthm}
\mathrm{Tr}_d(x) =  4 (-1)^{|E|} (-2)^{|V| - c(\mathrm{Gr}(x))} \ d^c \ T\left(\mathrm{Gr}(x), 1 - \tfrac{d}{2}, -1\right)
\end{equation}
where $c(\mathrm{Gr}(x))$ is the number of connected components of $\mathrm{Gr}(x)$, $V$ is the set of vertices of $\mathrm{Gr}(x)$, $E$ is the set of edges of $\mathrm{Gr}(x)$. 
\end{theorem}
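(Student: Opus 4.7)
The plan is to combine the two preceding propositions directly: \Cref{prop:a} gives $\mathrm{Tr}_d(x) = 4(-1)^{|E|}\chi(\mathrm{Gr}(x);-1,d)$, and \Cref{prop:b} expresses $\chi(G;q,n)$ as a Tutte polynomial evaluation. So essentially all the conceptual work has already been carried out, and the theorem should follow from a substitution.

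Concretely, I would first set $q = -1$ and $n = d$ in the identity
\[
\chi(G;q,n) = n^{c(G)} (q-1)^{|V|-c(G)} T\!\left(G;\, 1 + \tfrac{n}{q-1},\, q\right)
\]
from \Cref{prop:b}. With these values one has $q - 1 = -2$ and $1 + \tfrac{n}{q-1} = 1 - \tfrac{d}{2}$, so
\[
\chi(G;-1,d) = d^{c(G)} (-2)^{|V|-c(G)} T\!\left(G;\, 1 - \tfrac{d}{2},\, -1\right).
\]
Then I would substitute this into the expression for $\mathrm{Tr}_d(x)$ from \Cref{prop:a} with $G = \mathrm{Gr}(x)$, which gives exactly the right-hand side of \Cref{eq:mainthm}. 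Uniqueness of the map $\mathrm{Tr}_d$ satisfying the postulates of \Cref{defn:Trd} is already established in \Cref{prop:a}, so no separate argument is needed.

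Since the real content has been distilled into \Cref{prop:a} (the reinterpretation of the Dirac anticommutation relation as a chromatic-polynomial recursion on the chord-intersection graph) and \Cref{prop:b} (the generalised chromatic polynomial as a Tutte evaluation via the recipe theorem), the proof of the theorem is essentially a matter of bookkeeping. The only point to check carefully is that the substitution $q = -1$ is legitimate: \Cref{prop:b} was proved for positive integer $n$ with $q$ as a formal parameter, but since both sides are polynomials in $q$ and $n$, the identity extends to the required specialisation $q = -1$. This is the same sort of interpolation argument already used at the end of the proof of \Cref{prop:a}, so it presents no real obstacle.
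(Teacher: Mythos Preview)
Your proposal is correct and matches the paper's own proof, which simply states that \Cref{eq:mainthm} follows from \Cref{prop:a} and \Cref{prop:b}. You have in fact written out the substitution $q=-1$, $n=d$ more carefully than the paper does, and your remark about the polynomial identity extending to $q=-1$ is a reasonable sanity check that the paper leaves implicit.
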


\begin{proof}
\Cref{eq:mainthm} follows from \Cref{prop:a} and \Cref{prop:b}. 
\end{proof}

An immediate observation from \Cref{mainthm} is that traces in $d = 4$ correspond to evaluating the Tutte polynomial $T(G;-1,-1)$ for appropriately constructed graphs $G$. This point $(x,y) = (-1,-1)$ is one of the special points in the Tutte plane where the Tutte polynomial is known to be calculable in polynomial time. Following the definitions in \cite{ROSENSTIEHL1978195}, consider $2^V,2^E$ as the free vector spaces over $\mathbb{Z}/2\mathbb{Z}$ with basis given by the set of vertices $V$, and the set of edges $E$ respectively for a graph $G = (V,E)$. There are boundary $\partial$ and coboundary $\delta$ maps defined between the two vector spaces:
\begin{equation}
\begin{tikzcd}[column sep=large]
     2^V \rar["\delta", shift left] &  2^E \lar["\partial", shift left]
\end{tikzcd}
\end{equation}
where $\partial(e)$ is the sum of the two endpoints of edge $e$, $\delta(v)$ is the sum over all edges incident to $v$ (loops are counted twice, thus they don't appear in the result as the base field is $\mathbb{Z}/2\mathbb{Z}$), and the maps are extended by linearity to be well defined on all of $2^V$ and $2^E$. The subspace $\mathrm{ker}(\partial) \cap \mathrm{im}(\delta) \subseteq 2^E$ is the space of bicycles. 

\begin{theorem}[\cite{ROSENSTIEHL1978195}, Thm. 9.1]\label{thm:bic} If a graph $G$ has $|E|$ edges and the dimension of the bicycle space of $G$ is $\mathrm{dim}(B(G))$, then
\begin{equation}
T(G;-1,-1) = (-1)^{|E|} (-2)^{\mathrm{dim}(B(G))}
\end{equation}
\end{theorem}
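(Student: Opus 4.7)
The plan is to route the computation through the generalised chromatic polynomial of \Cref{defn:gc} and carry out a Fourier-style sum over $\mathbb{F}_2^2$-valued colourings of the vertex set. By \Cref{prop:b}, evaluating $T(G;-1,-1)$ reduces to computing $\chi(G;-1,4) = \sum_\sigma (-1)^{\mathrm{coll}(\sigma)}$ with $\sigma : V \to \{1,\ldots,4\}$, up to a prefactor $4^{c(G)}(-2)^{|V|-c(G)}$. Identifying the colour set with $\mathbb{F}_2^2$ and rewriting $\mathrm{coll}(\sigma) = |E| - |\mathrm{supp}(\partial\sigma)|$, where $\partial : (\mathbb{F}_2^2)^V \to (\mathbb{F}_2^2)^E$ is the coboundary map $(\partial\sigma)_e = \sigma_u + \sigma_v$ for $e = (u,v)$, converts the sum to $(-1)^{|E|} \sum_\sigma (-1)^{|\mathrm{supp}(\partial\sigma)|}$.

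Next I would push this sum through $\partial$: the kernel has size $4^{c(G)}$ (colourings constant on components) and the image is $Z^*(G) \times Z^*(G) \subseteq (\mathbb{F}_2^2)^E$, so the sum becomes one over pairs of cocycles $(\eta_1, \eta_2)$. Using the inclusion-exclusion identity $|\mathrm{supp}(\eta_1) \cup \mathrm{supp}(\eta_2)| \equiv |\eta_1| + |\eta_2| + \eta_1 \cdot \eta_2 \pmod{2}$ and applying the character-orthogonality relation $\sum_{\eta \in Z^*(G)} (-1)^{\eta \cdot v} = 2^{r(G)} \cdot [v \in Z(G)]$, which follows from $(Z^*)^\perp = Z$ in the standard bilinear form on $\mathbb{F}_2^E$, reduces the inner $\eta_1$-sum to an indicator. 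The outer sum then ranges over $\eta_2 \in Z^*(G) \cap (Z(G) + \mathbbm{1}_E)$, a translate of the bicycle space $B(G)$. A short lemma establishes this coset is always nonempty: it is equivalent to $\mathbbm{1}_E \in B(G)^\perp$, which holds because every bicycle $\beta$ satisfies both $\beta \cdot \beta = 0$ (from $Z \perp Z^*$) and $\beta \cdot \beta \equiv |\beta| \pmod{2}$, forcing $|\beta|$ even.

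The coset contributes $(-1)^{|\eta_0|} \cdot 2^{\dim B(G)}$ for any chosen representative $\eta_0$, and assembling the prefactors yields $T(G;-1,-1) = (-1)^{|E|+|\eta_0|+r(G)} \cdot 2^{\dim B(G)}$. Matching this against the target $(-1)^{|E|}(-2)^{\dim B(G)}$ requires the parity identity $|\eta_0| + r(G) \equiv \dim B(G) \pmod{2}$, and this is the main obstacle. Equivalently, writing $\mathbbm{1}_E = z + \eta_0$ with $z \in Z(G), \eta_0 \in Z^*(G)$, one needs $|z| \equiv \dim Z(G) - \dim B(G) \pmod{2}$; the parity of $|z|$ is well-defined modulo its $B$-coset (because bicycles have even support), but pinning down its actual value requires a structural input such as a diagonalisation argument for the symmetric bilinear form on $Z(G)$ whose radical is $B(G)$. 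An alternative inductive proof by deletion-contraction sidesteps the Fourier manipulations but shifts the burden to showing that $\dim B(G)$ changes by at most one under deletion or contraction of an ordinary edge, and then verifying the three Tutte recurrences (loop, bridge, ordinary edge) against the putative formula.
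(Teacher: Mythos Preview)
The paper does not prove this statement; it is quoted from Rosenstiehl--Read as an external input and immediately used as a black box to derive \Cref{cor}. There is therefore no in-paper argument to compare yours against.

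Your Fourier-analytic route is correct, and the obstacle you flag is not serious: the parity identity $|\eta_0| + r(G) \equiv \dim B(G) \pmod{2}$ follows from a two-line lemma over $\mathbb{F}_2$. Namely, if $W$ is a finite-dimensional $\mathbb{F}_2$-space carrying a nondegenerate symmetric bilinear form and $z \in W$ is the unique vector with $(x,z)=(x,x)$ for all $x$, then $(z,z)=\dim W$ in $\mathbb{F}_2$. Fix any basis with Gram matrix $A$, set $B=A^{-1}$ and $d_i=A_{ii}$; then $z=Bd$ and
\[
(z,z)\;=\;d^{T}Bd\;=\;\sum_i A_{ii}B_{ii}\;=\;\sum_{i,j}A_{ij}B_{ij}\;=\;\mathrm{tr}(AB)\;=\;\dim W,
\]
the middle two equalities holding because off-diagonal contributions pair up and cancel in characteristic $2$ (using symmetry of $A$ and $B$). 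Apply this with $W=Z^{*}(G)/B(G)$, on which the ambient form descends nondegenerately. Your $\eta_0$ projects to precisely this distinguished $z$: for any cocycle $\zeta\in Z^{*}(G)$ one has $(\zeta,\eta_0)=\zeta\cdot(\mathbbm{1}_E+z_{\mathrm{cycle}})=|\zeta|=(\zeta,\zeta)$, since $Z\perp Z^{*}$. Hence $|\eta_0|\equiv\dim(Z^{*}/B)=r(G)-\dim B(G)$, and the required parity $|\eta_0|+r(G)\equiv\dim B(G)$ drops out. With this in hand your argument is complete and self-contained, and in fact rather cleaner than the deletion-contraction alternative you sketch (which does work, but requires a case split on the Rosenstiehl--Read edge tripartition to track how $\dim B$ moves). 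One small notational remark: your map $\partial$ is what the paper calls $\delta$; the paper reserves $\partial$ for the boundary $2^E\to 2^V$.
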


\begin{theorem}\label{cor}
For $x \in P(n)$, the $4$-dimensional trace is given by 
\begin{equation}
\mathrm{Tr}_4(x) = 4 (-2)^{|V|+c(\mathrm{Gr}(x))+\mathrm{dim}(B(\mathrm{Gr}(x)))}
\end{equation}
where $c(\mathrm{Gr}(x))$ is the number of connected components of $\mathrm{Gr}(x)$, $V$ is the set of vertices of $\mathrm{Gr}(x)$, and $\mathrm{dim}(B(G))$ is the dimension of the bicycle space of $G$. 
\end{theorem}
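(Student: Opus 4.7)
The plan is to obtain Theorem \ref{cor} as a direct specialization of Theorem \ref{mainthm} combined with the Rosenstiehl--Read evaluation recorded in Theorem \ref{thm:bic}. First I would set $d = 4$ in \Cref{eq:mainthm}. The crucial observation is that the first Tutte variable $1 - d/2$ becomes $-1$ when $d = 4$, so the expression evaluates the Tutte polynomial at the special point $(x,y) = (-1,-1)$. Explicitly, the main theorem specializes to
\begin{equation*}
\mathrm{Tr}_4(x) = 4 (-1)^{|E|} (-2)^{|V| - c(\mathrm{Gr}(x))} \, 4^{c(\mathrm{Gr}(x))} \, T\!\left(\mathrm{Gr}(x); -1, -1\right).
\end{equation*}

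Next I would invoke \Cref{thm:bic} to replace $T(\mathrm{Gr}(x);-1,-1)$ by $(-1)^{|E|}(-2)^{\mathrm{dim}(B(\mathrm{Gr}(x)))}$. Substituting yields
\begin{equation*}
\mathrm{Tr}_4(x) = 4 (-1)^{2|E|} (-2)^{|V| - c(\mathrm{Gr}(x))} \, 4^{c(\mathrm{Gr}(x))} \, (-2)^{\mathrm{dim}(B(\mathrm{Gr}(x)))}.
\end{equation*}
The remaining task is purely arithmetic: $(-1)^{2|E|} = 1$, and rewriting $4^{c(\mathrm{Gr}(x))} = (-2)^{2 c(\mathrm{Gr}(x))}$ allows all the powers of $-2$ to be combined. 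The exponent becomes $|V| - c(\mathrm{Gr}(x)) + 2 c(\mathrm{Gr}(x)) + \mathrm{dim}(B(\mathrm{Gr}(x))) = |V| + c(\mathrm{Gr}(x)) + \mathrm{dim}(B(\mathrm{Gr}(x)))$, which matches the desired formula.

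There is no substantive obstacle here; the result is essentially a two-line corollary once \Cref{mainthm} and \Cref{thm:bic} are in hand. The only small thing to be careful about is the bookkeeping of signs and the regrouping $4^{c} = (-2)^{2c}$, which ensures all contributions can be folded into a single power of $-2$. The real content of the statement is conceptual rather than computational: it identifies the $4$-dimensional trace with an invariant (the bicycle dimension) that is computable in polynomial time, setting up \Cref{thm:4P} in the subsequent section.
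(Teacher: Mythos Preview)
Your proof is correct and follows exactly the approach the paper takes: the paper's proof simply states that the result follows from \Cref{mainthm} and \Cref{thm:bic}, and you have carried out that substitution and the accompanying sign/exponent bookkeeping explicitly. There is nothing to add.
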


\begin{proof}
Follows from \Cref{thm:bic,mainthm}. 
\end{proof}

\section{Comparisons to traditional algorithms}\label{sec:trad}

In light of the connection between Dirac traces and Tutte polynomial evaluations, standard identities used to simplify Dirac traces can be reinterpreted as identities for Tutte polynomials. This exercise can be instructive in understanding the structure and nature of currently used algorithms. For instance, both TRACER \cite{Jamin:1991dp} and FORM \cite{Kuipers:2012rf} make use of variations of the following identity to simplify $d$-dimensional traces:

\begin{align}\label{eq:drel}
\gamma_\mu \gamma_{\nu_1} \cdots \gamma_{\nu_n} \gamma_\mu = (-1)^n d\  \gamma_{\nu_1} \cdots \gamma_{\nu_n} + 2 \sum_{j=1}^n (-1)^{n-j} \gamma_{\nu_j} \gamma_{\nu_1} \cdots \gamma_{\nu_{j-1}} \gamma_{\nu_{j+1}} \cdots \gamma_{\nu_n}
\end{align}

{\noindent}where the $\gamma_\mu$ at the right end of the chain on the left hand side of the equation is anticommuted through the chain all the way to the left side. This can be interpreted as an identity for Tutte polynomials:

\begin{align}
&T \left( 
\begin{tikzpicture}[baseline=-20pt]
\node (a) at (0.5,0) {$\bullet$};
\node (c) at (0.5,-1.2) {}; 
\draw [fill=gray!20] (c) ellipse (0.8cm and 0.4cm);
\node (aa) at (-0.2,-1.03) {$\bullet$};
\node (ab) at (0.1,-0.87) {$\bullet$};
\node (ac) at (0.4,-0.82) {$\bullet$};
\node (ad) at (1.2,-1.03) {$\bullet$};
\node (ae) at (0.69,-0.6) {$\cdots$};
\node (za) at (-0.1,-1.2) {${\scriptstyle{\nu_1}}$};
\node (zb) at (0.2,-1.05) {${\scriptstyle{\nu_2}}$};
\node (zc) at (0.55,-0.99) {${\scriptstyle \nu_3}$};
\node (zd) at (1.1,-1.15) {${\scriptstyle \nu_n}$};
\node (ba) at (0.5,0.2) {$\mu$};
\draw (a.center) to (aa.center);
\draw (a.center) to (ab.center);
\draw (a.center) to (ac.center);
\draw (a.center) to (ad.center);
\end{tikzpicture}; x,y
\right) = x \cdot T\left(
\begin{tikzpicture}[baseline=-35pt]
\node (c) at (0.5,-1.2) {}; 
\draw [fill=gray!20] (c) ellipse (0.8cm and 0.4cm);
\node (aa) at (-0.2,-1.03) {$\bullet$};
\node (ab) at (0.1,-0.87) {$\bullet$};
\node (ac) at (0.4,-0.82) {$\bullet$};
\node (ad) at (1.2,-1.03) {$\bullet$};
\node [rotate=-10] (ae) at (0.85,-0.77) {$\cdots$};
\node (za) at (-0.1,-1.2) {${\scriptstyle{\nu_1}}$};
\node (zb) at (0.2,-1.05) {${\scriptstyle{\nu_2}}$};
\node (zc) at (0.55,-0.99) {${\scriptstyle \nu_3}$};
\node (zd) at (1.1,-1.15) {${\scriptstyle \nu_n}$};
\end{tikzpicture}; x,y\right) +  \nonumber\\
&\qquad \qquad \qquad T \left( 
\begin{tikzpicture}[baseline=-27pt]
\node (c) at (0.5,-1.2) {}; 
\draw [fill=gray!20] (c) ellipse (0.8cm and 0.4cm);
\node (aa) at (-0.2,-1.03) {$\bullet$};
\node (ab) at (0.1,-0.87) {$\bullet$};
\node (ac) at (0.4,-0.82) {$\bullet$};
\node (ad) at (1.2,-1.03) {$\bullet$};
\node [rotate=-10] (ae) at (0.85,-0.77) {$\cdots$};
\node (za) at (-0.1,-1.2) {${\scriptstyle{\nu_1}}$};
\node (zb) at (0.2,-1.05) {${\scriptstyle{\nu_2}}$};
\node (zc) at (0.55,-0.99) {${\scriptstyle \nu_3}$};
\node (zd) at (1.1,-1.15) {${\scriptstyle \nu_n}$};
\draw (aa.center) to [out=90, in=80, looseness=1.8] (ad.center); 
\draw (ab.center) to [out=80, in=90, looseness=1.5] (ad.center);
\draw (ac.center) to [out=80, in=100, looseness=1.3] (ad.center);
\end{tikzpicture}
;x,y\right) + \cdots + T \left( \begin{tikzpicture}[baseline=-32pt]
\node (c) at (0.5,-1.2) {}; 
\draw [fill=gray!20] (c) ellipse (0.8cm and 0.4cm);
\node (aa) at (-0.2,-1.03) {$\bullet$};
\node (ab) at (0.1,-0.87) {$\bullet$};
\node (ac) at (0.4,-0.82) {$\bullet$};
\node (ad) at (1.2,-1.03) {$\bullet$};
\node [rotate=-10] (ae) at (0.85,-0.77) {$\cdots$};
\node (za) at (-0.1,-1.2) {${\scriptstyle{\nu_1}}$};
\node (zb) at (0.2,-1.05) {${\scriptstyle{\nu_2}}$};
\node (zc) at (0.55,-0.99) {${\scriptstyle \nu_3}$};
\node (zd) at (1.1,-1.15) {${\scriptstyle \nu_n}$};
\draw (aa.center) to [out=100, in=80, looseness=2.5] (ac.center); 
\draw (ab.center) to [out=110, in=90, looseness=2.5] (ac.center);
\end{tikzpicture}; x,y \right) + T \left( 
\begin{tikzpicture}[baseline=-35pt]
\node (c) at (0.5,-1.2) {}; 
\draw [fill=gray!20] (c) ellipse (0.8cm and 0.4cm);
\node (aa) at (-0.2,-1.03) {$\bullet$};
\node (ab) at (0.1,-0.87) {$\bullet$};
\node (ac) at (0.4,-0.82) {$\bullet$};
\node (ad) at (1.2,-1.03) {$\bullet$};
\node [rotate=-10] (ae) at (0.85,-0.77) {$\cdots$};
\node (za) at (-0.1,-1.2) {${\scriptstyle{\nu_1}}$};
\node (zb) at (0.2,-1.05) {${\scriptstyle{\nu_2}}$};
\node (zc) at (0.55,-0.99) {${\scriptstyle \nu_3}$};
\node (zd) at (1.1,-1.15) {${\scriptstyle \nu_n}$};
\draw (aa.center) to [out=120, in=80, looseness=3] (ab.center); 
\end{tikzpicture}
;x,y\right)
\end{align}

{\noindent}where the grey blob represents the rest of the graph, with possible edges between the nodes labelled $\nu_1,\dots,\nu_n$. The identity is proven by repeatedly applying the deletion-contraction formula to the edges connected to $\mu$, starting with the edge connected to $\nu_n$. 
For traces in four-dimensions, there is a Chisholm identity for odd values of $n$ \cite{Chisholm:1972rw}:
\begin{equation}\label{eq:chisholm}
\mathrm{Tr}_4(\gamma_\mu \gamma_{\nu_1} \cdots \gamma_{\nu_n} \gamma_\mu) = -2\mathrm{Tr}_4(\gamma_{\nu_n} \cdots \gamma_{\nu_1})
\end{equation}
which translates into a theorem concerning the bicycle number:
\begin{theorem}\label{thm:odd}
For a graph $G$ with nodes $\mu$, $\nu_1,\dots,\nu_n$ where $n$ is odd and the only edges connected to $\mu$ are single edges to each of the nodes $\nu_1,\dots,\nu_n$, then there is a natural isomorphism between the bicycle space $B(G)$ of $G$, and the bicycle space $B((G - \mu) \star K_n)$ of a modified graph where the vertex $\mu$ has been removed, and edges have been attached between every pair of distinct vertices in the list $\nu_1,\dots,\nu_n$ (equivalently, a copy of the complete graph on $n$ nodes $K_n$ is glued onto the vertices $\{\nu_1,\dots,\nu_n\}$).  
\end{theorem}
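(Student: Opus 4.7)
Recall that the bicycle space $B(H) = \ker(\partial) \cap \operatorname{im}(\delta)$ consists of edge sets which are simultaneously cuts $\delta_H(S)$ for some $S \subseteq V(H)$ and cycles (even degree at every vertex). Writing $G' := (G - \mu) \star K_n$ so that $V(G') = V(G) \setminus \{\mu\}$, I would propose the candidate isomorphism
\begin{equation*}
\phi : B(G) \to B(G'), \qquad \phi\bigl(\delta_G(S)\bigr) := \delta_{G'}\bigl(S \cap V(G')\bigr).
\end{equation*}
Well-definedness under $S \leftrightarrow V(G) \setminus S$ follows from $(V(G) \setminus S) \cap V(G') = V(G') \setminus (S \cap V(G'))$, and linearity is immediate from $\delta_H(S_1) + \delta_H(S_2) = \delta_H(S_1 \triangle S_2)$. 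Note that $\phi(b)$ restricted to $E(G - \mu)$ equals $b$ restricted to $E(G - \mu)$, so the only nontrivial part of $\phi$ is to trade the $n$ single edges $e_i$ incident to $\mu$ for the bipartite sub-collection of the $K_n$-edges crossing the cut on $\{\nu_1,\dots,\nu_n\}$.

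The heart of the argument is to check that $\phi(b) \in B(G')$, i.e.\ that $\phi(b)$ still has even degree at every vertex of $G'$. At any $w \neq \nu_i$, the edges of $\phi(b)$ incident to $w$ are identical to those of $b$, so evenness is inherited. At $\nu_i$, fix the representative with $\mu \notin S$ and let $k := |\{j : \nu_j \in S\}|$. The degree of $b$ at $\nu_i$ equals $\deg_{b \cap E(G-\mu)}(\nu_i) + \mathbf{1}[\nu_i \in S]$, whereas the degree of $\phi(b)$ at $\nu_i$ equals $\deg_{b \cap E(G-\mu)}(\nu_i) + (n - k)$ if $\nu_i \in S$ and $\deg_{b \cap E(G-\mu)}(\nu_i) + k$ if $\nu_i \notin S$. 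The bicycle condition at $\mu$ in $G$ forces the degree of $b$ at $\mu$, which is exactly $k$, to be even; combined with $n$ odd, this makes $n - k$ odd and $k$ even, and in either sub-case the parities cancel so that $\phi(b)$ has even degree at $\nu_i$. This parity bookkeeping is the only real obstacle, and it is precisely where the hypothesis that $n$ is odd enters; for even $n$ the matching fails, in line with the fact that the underlying Chisholm identity \Cref{eq:chisholm} only holds for odd $n$.

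For injectivity, if $\phi(b) = 0$ then $S \cap V(G')$ has empty boundary in $G'$, and since the glued $K_n$ is connected all $\nu_i$ must lie on a common side of $S$; the parity constraint that $k$ is even rules out the case ``all $\nu_i \in S$'' (which would give $k = n$ odd), so all $\nu_i \notin S$, forcing $b \cap \{e_1,\dots,e_n\} = \emptyset$, and then $b \cap E(G-\mu) \subseteq \delta_{G'}(S \cap V(G')) = 0$ gives $b = 0$. For surjectivity, given $b' = \delta_{G'}(S'') \in B(G')$ and writing $k := |\{i : \nu_i \in S''\}|$, I would set $S := S''$ if $k$ is even and $S := S'' \cup \{\mu\}$ if $k$ is odd; running the same parity computation in reverse verifies that $\delta_G(S) \in B(G)$ and $\phi(\delta_G(S)) = b'$. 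The word ``natural'' in the theorem statement is justified by the manifest cut-by-cut construction of $\phi$, which involves no arbitrary choices. As a sanity check, taking dimensions recovers the numerical content of the Chisholm identity via \Cref{cor}.
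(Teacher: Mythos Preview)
Your proposal is correct and follows essentially the same route as the paper: both define the map by $\delta_G(S)\mapsto \delta_{G'}(S\setminus\{\mu\})$, both identify that the image restricted to the new $K_n$-edges is the complete bipartite piece determined by $S\cap\{\nu_1,\dots,\nu_n\}$, and both pinpoint the parity step (evenness of $k=\deg_b(\mu)$ together with $n$ odd) as the place where the hypothesis enters. You spell out the parity check at each $\nu_i$ and the injectivity/surjectivity arguments more explicitly than the paper, which instead just asserts that the inverse is built ``completely analogously''; the only place you are slightly terser than the paper is well-definedness, where you treat only the ambiguity $S\leftrightarrow V(G)\setminus S$, but since the remaining ambiguities come from connected components not containing $\mu$ and are trivially handled, this is harmless.
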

\begin{equation}
B\left(\begin{tikzpicture}[baseline=-20pt]
\node (a) at (0.5,0) {$\bullet$};
\node (c) at (0.5,-1.2) {}; 
\draw [fill=gray!20] (c) ellipse (0.8cm and 0.4cm);
\node (aa) at (-0.2,-1.03) {$\bullet$};
\node (ab) at (0.1,-0.87) {$\bullet$};
\node (ac) at (0.4,-0.82) {$\bullet$};
\node (ad) at (1.2,-1.03) {$\bullet$};
\node (ae) at (0.69,-0.6) {$\cdots$};
\node (za) at (-0.1,-1.2) {${\scriptstyle{\nu_1}}$};
\node (zb) at (0.2,-1.05) {${\scriptstyle{\nu_2}}$};
\node (zc) at (0.55,-0.99) {${\scriptstyle \nu_3}$};
\node (zd) at (1.1,-1.15) {${\scriptstyle \nu_n}$};
\node (ba) at (0.5,0.2) {$\mu$};
\draw (a.center) to (aa.center);
\draw (a.center) to (ab.center);
\draw (a.center) to (ac.center);
\draw (a.center) to (ad.center);
\end{tikzpicture}\right) \simeq B\left(
\begin{tikzpicture}[baseline=-27pt]
\node (c) at (0.5,-1.2) {}; 
\draw [fill=gray!20] (c) ellipse (0.8cm and 0.4cm);
\node (aa) at (-0.2,-1.03) {$\bullet$};
\node (ab) at (0.1,-0.87) {$\bullet$};
\node (ac) at (0.4,-0.82) {$\bullet$};
\node (ad) at (1.2,-1.03) {$\bullet$};
\node [rotate=-10] (ae) at (0.85,-0.77) {$\cdots$};
\node (za) at (-0.1,-1.2) {${\scriptstyle{\nu_1}}$};
\node (zb) at (0.2,-1.05) {${\scriptstyle{\nu_2}}$};
\node (zc) at (0.55,-0.99) {${\scriptstyle \nu_3}$};
\node (zd) at (1.1,-1.15) {${\scriptstyle \nu_n}$};
\draw (aa.center) to [out=110, in=80, looseness=1.8] (ad.center); 
\draw (ab.center) to [out=100, in=90, looseness=1.5] (ad.center);
\draw (ac.center) to [out=80, in=100, looseness=1.3] (ad.center);
\draw (aa.center) to [out = 100, in = 90, looseness=1.8] (ac.center);
\draw (ab.center) to [out = 90, in = 90, looseness=1.5] (ac.center);
\draw (aa.center) to [out = 90, in = 90, looseness=1.5] (ab.center);
\end{tikzpicture}
\right)
\end{equation}
\begin{proof}
It is possible to utilise the Chisholm identity as well as \Cref{cor} to conclude at the very least that the dimensions of the respective bicycle spaces are the same. 
The following purely graph-theoretic proof presents more geometric intuition for the theorem, as well as providing an explicit isomorphism $f : B(G) \to B((G - \mu) \star K_n)$ between the two bicycle spaces. 

For any $X \in B(G)$, $X \in \mathrm{im}(\delta)$, so choose a representative $Y \in \delta^{-1}(X)$. The collection of all vertices in $Y$ that are not $\mu$ forms a subset of vertices $Y'$ in $(G - \mu) \star K_n$, define $f(X) := \delta(Y')$. 
For this map to be well defined, $\delta(Y')$ must first be invariant under the choice of $Y \in \delta^{-1}(X)$. 
To see this, first note that because $X$ is a bicycle, $X \in \mathrm{ker}(\partial)$ and hence there must be an even number of edges connected to $\mu$, corresponding to an even cardinality subset $N \subseteq \{\nu_1,\dots,\nu_n\}$ of neighbors of $\mu$ in the subgraph defined by $X$. 
If $\mu \notin Y$, then $Y \cap \{\nu_1,\dots,\nu_n\} = N$, conversely if $\mu \in Y$ then $Y \cap \{\nu_1,\dots,\nu_n\} = \{\nu_1,\dots,\nu_n\} - N$ owing to the fact that $\delta(Y) = X$. 
In either situation, $\delta(Y') \cap K_n$ is given by the complete bipartite graph $K_{|N|,n-|N|}$ which contains exactly every edge connecting vertices in $N$ to vertices outside of $N$ in the subset $\{\nu_1,\dots,\nu_n\}$. Hence, $\delta(Y')$ is invariant under your choice of $Y \in \delta^{-1}(X)$. 
Note that it is precisely because $n$ is odd that $\partial(\delta(Y')) = 0$, as the vertices in $N$ have an odd number of $K_n$ edges ($n - |N|$) connected to them. 

The above remarks have demonstrated a map $f : B(G) \to B((G-\mu) \star K_n)$. The inverse map can be constructed completely analogously, where for any $X \in B((G - \mu) \star K_n)$, pick a $Y \in \delta^{-1}(X)$, and set $f^{-1}(X) = \delta(Y)$ (where the vertices in $Y$ are now considered as vertices of $G$). 
\end{proof}

For even values of $n$ the corresponding statement about bicycle spaces does not hold (a counterexample is given by the four-cycle $G = C_4$, where $\mathrm{dim}(B(G)) = 1$ but $\mathrm{dim}(B((G - \mu) \star K_2)) = 0$ for any vertex $\mu \in C_4$). Instead, for even $n$ there is a different relation:
\begin{equation}\label{eq:chisholm2}
\gamma_\mu \gamma_{\nu_1} \cdots \gamma_{\nu_n} \gamma_\mu = 2 \gamma_{\nu_n} \gamma_{\nu_1} \cdots \gamma_{\nu_{n-1}} + 2 \gamma_{\nu_{n-1}} \cdots \gamma_{\nu_1} \gamma_{\nu_n}  
\end{equation}
which can be proven by applying the anticommutation relation to $\gamma_\mu$ on the right once, and then applying the Chisholm identity \Cref{eq:chisholm}. Naively, the number of terms generated by applications of the $d$-dimensional recurrence relation \Cref{eq:drel} and the $4$-dimensional recurrence relations \Cref{eq:chisholm,eq:chisholm2} in order to simplify a string of $\gamma$-matrices is exponential in the length of the string. A natural question to ask is what computational complexity classes evaluations of $d$-dimensional and $4$-dimensional traces belongs to, and this is discussed in \Cref{sec:comp}.

A simple reason that dedicated Tutte polynomial algorithms might be asymptotically faster than currently used $\mathrm{Tr}_d$ algorithms is that being able to use the Tutte polynomial deletion-contraction relation on graphs can relate Dirac traces to Tutte polynomial evaluations of graphs that are \textit{not} Circle graphs. To understand this in more detail, first we introduce the notion of local equivalence of graphs in order to understand the classification of Circle graphs. To locally complement a graph $G$ at a vertex $v$ is to replace the induced subgraph on all the neighbors of $v$ with the complement graph (to be precise, neighbors $w_i,w_j$ of $v$ are connected by an edge in the new graph if and only if they were not connected by an edge in the original graph) . If a graph $G_1$ can by a series of local complementations turn into a graph $G_2$ then they are said to be locally equivalent. 

\begin{theorem}[\cite{BOUCHET1994107}]\label{thm:fig2}
A simple graph $G$ is a circle graph if and only if no graph locally equivalent to $G$ has an induced subgraph isomorphic to one of the graphs depicted below:
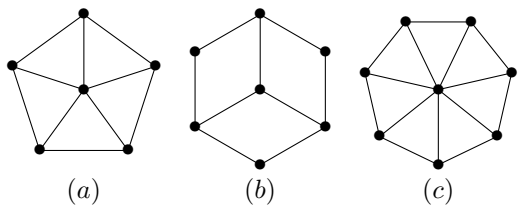
\begin{figure}[H]
\centering
\begin{tikzpicture}
\node (0) at (0,0) {$\bullet$};
\foreach \i in {1,...,5}
  {\node (\i) at ({cos(\i*360/5+90/5)},{sin(\i*360/5+90/5)}) {$\bullet$};
  \draw (0.center) to (\i.center);}
\foreach \i[evaluate={\j=int(mod(\i,5)+1;}] in {1,...,5}
  \draw (\i.center) to (\j.center);
  \node at (0,-1.35) {$(a)$};
\end{tikzpicture}
\begin{tikzpicture}
    \node (0) at (0,0) {$\bullet$};
\foreach \i in {1,...,6}
  \node (\i) at ({cos(\i*360/6+180/6)},{sin(\i*360/6+180/6)}) {$\bullet$};
\foreach \i in {1,3,5}
  \draw (0.center) to (\i.center);
\foreach \i[evaluate={\j=int(mod(\i,6)+1;}] in {1,...,6}
  \draw (\i.center) to (\j.center);
  \node at (0,-1.35) {$(b)$};
\end{tikzpicture}
\begin{tikzpicture}
\node (0) at (0,0) {$\bullet$};
\foreach \i in {1,...,7}
  {\node (\i) at ({cos(\i*360/7+90/7)},{sin(\i*360/7+90/7)}) {$\bullet$};
  \draw (0.center) to (\i.center);}
\foreach \i[evaluate={\j=int(mod(\i,7)+1;}] in {1,...,7}
  \draw (\i.center) to (\j.center);
  \node at (0,-1.35) {$(c)$};
\end{tikzpicture}
\caption{Key examples of graphs that are not circle graphs.}\label{fig2}
\end{figure}
\end{theorem}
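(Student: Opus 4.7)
The proof splits naturally into a routine necessity direction and a substantially deeper sufficiency direction.

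For necessity, it suffices to show that (a) the class of circle graphs is closed under both induced subgraphs and local complementation, and (b) none of the three graphs in \Cref{fig2} is a circle graph. Closure under induced subgraphs is immediate: deleting a chord from a realizing diagram deletes the corresponding vertex from the intersection graph. For closure under local complementation at a vertex $v$, the plan is a short diagrammatic argument --- cut the circle at the two endpoints of the chord corresponding to $v$, reverse the cyclic order of chord endpoints along one of the two resulting arcs, and verify by case analysis on pairs of chords that this operation toggles exactly the adjacencies between pairs of neighbors of $v$ while leaving all other intersections unchanged, which is the definition of local complementation. With these two closure properties in hand, I would verify (b) directly by enumerating the (small) local equivalence classes of each of $(a)$, $(b)$, $(c)$ and checking that no representative admits a chord realization; each class is small enough that the forced adjacency patterns on a handful of chords can be ruled out by inspection.

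The sufficiency direction is the substance of Bouchet's theorem, and the plan is to follow his original route through isotropic systems, which are the natural algebraic home for local equivalence classes of circle graphs. The steps are: (i) attach to each simple graph an isotropic system over $\mathrm{GF}(4)$ in such a way that local complementation at a vertex becomes a simple change of ``eulerian vector'' coordinate at that vertex, and vertex deletion becomes an induced subsystem; (ii) identify the circle graphs with the \emph{graphic} isotropic systems, i.e.\ those arising as the interlacement data of an Eulerian tour in a $4$-regular multigraph; and (iii) prove that the minimal non-graphic isotropic systems, viewed as graphs up to local equivalence, are exactly the three configurations in \Cref{fig2}.

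The main obstacle is step (iii). It requires a splitter-style minimality argument for isotropic systems: one takes a smallest non-graphic isotropic system with no forbidden vertex-minor, systematically rules out structural simplifications (for instance, the presence of certain ``reducible'' vertices that would allow one to shrink the counterexample while preserving non-graphicness), and ultimately forces the support to be small enough that it must be locally equivalent to one of the three listed graphs. This is the technical heart of the original proof. Since the theorem is invoked in the present work purely as a structural input for the discussion of local complementation in \Cref{sec:comp}, no independent reproof is attempted, and the sketch above merely records the route by which the result is established in the cited reference.
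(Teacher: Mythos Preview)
The paper does not prove this theorem at all; it is quoted verbatim from Bouchet with a citation and used as a black-box structural input in \Cref{sec:comp}. Your proposal correctly identifies this, and your sketch of the necessity direction together with the isotropic-systems route for sufficiency is an accurate outline of Bouchet's original argument, so there is nothing to compare against and nothing missing.
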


With this classification of Circle graphs, it is possible to construct relationships between an evaluation of $\mathrm{Tr}_d$ and Tutte polynomial evaluations that \textit{cannot} be written as $d$-dimensional traces, for example in the evaluation of the trace of the following product of $12$ $\gamma$-matrices:
\begin{equation*}
\mathrm{Tr}_d(\gamma_{\mu_1} \gamma_{\mu_3} \gamma_{\mu_2} \gamma_{\mu_5} \gamma_{\mu_4} \gamma_{\mu_6} \gamma_{\mu_5} \gamma_{\mu_3} \gamma_{\mu_4} \gamma_{\mu_1} \gamma_{\mu_2} \gamma_{\mu_6}) =  128 d \cdot T \left(
\begin{tikzpicture}[baseline=0pt]
\node (0) at (0,0) {$\bullet$};
\foreach \i in {1,...,5}
  {\node (\i) at ({cos(\i*360/5+90/5)},{sin(\i*360/5+90/5)}) {$\bullet$};
  \node at ({1.3*cos(\i*360/5+90/5)},{1.3*sin(\i*360/5+90/5)}) {$\mu_{\i}$};
  \draw (0.center) to (\i.center);}
\foreach \i[evaluate={\j=int(mod(\i,5)+1;}] in {1,...,4}
  \draw (\i.center) to (\j.center);
  \node at (0.3,0.3) {$\mu_6$};
\end{tikzpicture} ;1-\frac{d}{2},-1\right)
\end{equation*}
\begin{equation}\label{eq:rec}
= 128d \left( T \left( \begin{tikzpicture}[baseline=0pt]
\node (0) at (0,0) {$\bullet$};
\foreach \i in {1,...,5}
  {\node (\i) at ({cos(\i*360/5+90/5)},{sin(\i*360/5+90/5)}) {$\bullet$};
  \draw (0.center) to (\i.center);}
\foreach \i[evaluate={\j=int(mod(\i,5)+1;}] in {1,...,5}
  \draw (\i.center) to (\j.center);
\end{tikzpicture}; 1-\frac{d}{2},-1\right)  - T \left( \begin{tikzpicture}[baseline=0pt]
\node (0) at (0,0) {$\bullet$};
\foreach \i in {2,...,4}
  {\node (\i) at ({cos(\i*360/5+90/5)},{sin(\i*360/5+90/5)}) {$\bullet$};
  \draw (0.center) to (\i.center);}
\node (15) at ({cos(180/5+90/5)},{sin(180/5+90/5)}) {$\bullet$};
\draw (2.center) to (15.center);
\draw (4.center) to (15.center);
\foreach \i[evaluate={\j=int(mod(\i,5)+1;}] in {2,...,3}
  \draw (\i.center) to (\j.center);
\draw (0.center) to [bend left=30] (15.center);
\draw (0.center) to [bend right=30] (15.center);
\end{tikzpicture}; 1-\frac{d}{2},-1\right) \right)
\end{equation}
derived by applying the deletion-contraction formula to the graph \Cref{fig2}(a). 
This formula however cannot be written as a relation on single dirac traces, as the first graph appearing in the final line of \Cref{eq:rec} is not a circle graph. 
Such relations open the door to potentially faster algorithms for computing $\mathrm{Tr}_d$. 
For example, a generic algorithm is to specify certain special classes of graphs where the answer is known and use the deletion-contraction relation to recursively travel towards these special cases.
All current algorithms for $\mathrm{Tr}_d$ rely on this idea where the special case are graphs with no edges, however (as described in Ref~\cite{Pemmaraju_Skiena_2003} for chromatic polynomial computations) a trivial improvement is to include the complete graphs as a special case as well. 
In the case of \Cref{eq:rec}, it would require adding $6$ edges to complete the graph, but require taking away $9$ edges to reduce it to the empty graph. 
For this idea to be useful, an explicit formula is required for the Tutte polynomial evaluated on the complete graph, which is provided below:

\begin{theorem}
\begin{align}
    \mathrm{Tr}_d\left(\gamma_{\mu_1} \cdots \gamma_{\mu_n} \gamma_{\mu_1} \cdots \gamma_{\mu_n} \right) &= 
    (-1)^{ 1 + \lceil\frac{n}{2}\rceil } 2^{1 + n} d \ T\left(K_n; 1 - \frac{d}{2},-1\right) \nonumber \\
    &= 4 \sum_{\lambda \vdash n} (-1)^{\sum_{i < j} \lambda_i \lambda_j} \frac{1}{c_1! \cdots c_k!} \left(\prod_{i=0}^{l-1} (d-i)\right) \frac{n!}{\lambda_1! \cdots \lambda_l!}
\end{align}
    where the sum is performed over all distinct partitions $\lambda = (\lambda_1,\dots,\lambda_l)$ of $n$ (such that $i < j$ implies $\lambda_i \leq \lambda_j$, and $\sum_{i=1}^l \lambda_i = n$), and $c_1,\dots,c_k$ are the multiplicities of the distinct numbers in the partition. 
\end{theorem}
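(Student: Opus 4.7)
The plan is to reduce the claim to an application of \Cref{mainthm} together with a direct combinatorial evaluation of the generalised chromatic polynomial $\chi(K_n;-1,d)$ appearing in \Cref{prop:a}. The first step is to identify the circle graph $\mathrm{Gr}(x)$ for the tuple $x = (\mu_1,\dots,\mu_n,\mu_1,\dots,\mu_n)$. The symbol $\mu_i$ occupies positions $i$ and $i+n$ around the circle, so any two such chords have endpoints that strictly alternate and hence intersect. Therefore $\mathrm{Gr}(x) = K_n$, with $|V|=n$, $|E|=\binom{n}{2}$, and $c(\mathrm{Gr}(x))=1$.

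Plugging these data into \Cref{mainthm} gives
\[
\mathrm{Tr}_d(x) = 4\,(-1)^{\binom{n}{2}}(-2)^{n-1}\,d\;T\!\left(K_n;\,1-\tfrac{d}{2},\,-1\right),
\]
and a routine parity check (by case analysis on $n \bmod 4$) verifies $\binom{n}{2}+n-1 \equiv 1+\lceil n/2\rceil \pmod 2$, which delivers the first equality in the statement.

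For the explicit partition formula, I would instead use \Cref{prop:a} directly, writing $\mathrm{Tr}_d(x) = 4\,(-1)^{\binom{n}{2}}\chi(K_n;-1,d)$, and then expand the sum in \Cref{defn:gc} by grouping colorings $f:V(K_n)\to\{1,\dots,d\}$ according to the set partition of the $n$ vertices induced by the level sets of $f$. Since every pair of distinct vertices of $K_n$ is joined by an edge, a color class of size $\lambda_i$ contributes exactly $\binom{\lambda_i}{2}$ collisions, so a coloring whose induced partition has shape $\lambda=(\lambda_1,\dots,\lambda_l)$ contributes sign $(-1)^{\sum_i\binom{\lambda_i}{2}}$. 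The number of such colorings factors into the multinomial count of set partitions of shape $\lambda$, namely $\frac{n!}{\lambda_1!\cdots\lambda_l!\,c_1!\cdots c_k!}$, and the number of injections of the $l$ blocks into the $d$ available colors, namely $\prod_{i=0}^{l-1}(d-i)$.

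The last algebraic step is the elementary identity $\sum_i\binom{\lambda_i}{2} + \sum_{i<j}\lambda_i\lambda_j = \binom{n}{2}$, obtained by counting unordered pairs of elements according to whether they lie in the same or different blocks of $\lambda$. This converts the combined prefactor $(-1)^{\binom{n}{2}}(-1)^{\sum_i\binom{\lambda_i}{2}}$ into the claimed $(-1)^{\sum_{i<j}\lambda_i\lambda_j}$ and finishes the second equality. The only real obstacle is bookkeeping: keeping the integer partitions of $n$, the induced set partitions of the vertex set, the multiplicities $c_j$ of repeated part sizes, and the various sign conventions mutually consistent. No new conceptual input is required beyond \Cref{mainthm} and \Cref{prop:a}, so the theorem is essentially a clean unpacking of those results on the canonical \emph{long trace} example.
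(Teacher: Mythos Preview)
Your proposal is correct and follows essentially the same route as the paper: both use \Cref{prop:a} to write $\mathrm{Tr}_d(x)=4(-1)^{\binom{n}{2}}\chi(K_n;-1,d)$ and then expand $\chi$ by grouping colourings according to the integer partition $\lambda\vdash n$ they induce, with the falling factorial counting colour assignments and the multinomial counting vertex distributions. Your write-up is in fact more explicit than the paper's, since you spell out that $\mathrm{Gr}(x)=K_n$, verify the parity identity $\binom{n}{2}+n-1\equiv 1+\lceil n/2\rceil\pmod 2$ for the first line, and make the sign conversion $(-1)^{\binom{n}{2}}(-1)^{\sum_i\binom{\lambda_i}{2}}=(-1)^{\sum_{i<j}\lambda_i\lambda_j}$ transparent via the pair-counting identity rather than leaving it as ``accounts for all the colour collisions.''
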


\begin{proof}
    The proof follows from considering the representation $\mathrm{Tr}_d(x) = 4 (-1)^{|E|}  \ \chi(\mathrm{Gr}(x);-1,d)$ proven in \Cref{prop:a}. Each partition $\lambda = (\lambda_1,\dots,\lambda_l) \vdash n$ corresponds to a certain partition of the $n$ vertices into $l$ different colours, and the factor of $(-1)^{\sum_{i < j} \lambda_i \lambda_j}$ accounts for all the colour collisions. The factor of $\prod_{i=0}^{l-1} (d-i)$ counts the number of ways of assigning colours from $d$ choices to the different groups defined by the partition, but must be corrected by the factor of $\frac{1}{c_1! \cdots c_k!}$ to correct for overcounting. Finally, $\frac{n!}{\lambda_1!\cdots\lambda_l!}$ accounts for the different ways of distributing the colours among the vertices. 
\end{proof}

\section{Complexity of evaluating single Dirac traces} \label{sec:comp}

Given the connection developed between Dirac traces and computations of the Tutte polynomial, a natural question to consider is the asymptotic complexity of computing Dirac traces.
The algorithms to compute $\mathrm{Tr}_d$ with deletion-contraction relations as in \Cref{sec:trad} naively generate exponentially many terms, and one may wonder whether it is possible to find a polynomial time algorithm instead. 
As a quick review of the relevant complexity classes that will be discussed in this section, 
\begin{itemize}
    \item[-] $\P$ is the class of decision problems that can be solved by a deterministic Turing machine in a polynomial amount of computation time with respect to the size of the input. 
    \item[-] $\NP$ is the class of decision problems for which any successful query is accompanied by the existence of a polynomial-length `certificate', which can be verified in polynomial-time by a fixed deterministic algorithm. 
    \item[-] $\FP$ is the class of function problems which are computable in polynomial time.
    \item[-] $\#\P$ is the class of function problems corresponding to counting the number of certificates to given decision problems in $\NP$ \cite{VALIANT1979189}. 
    \item[-] $\P^{A}$ ($\FP^{A}$) for a given problem $A$ is the class of decision (function) problems which are computable in polynomial time given access to an oracle that either decides $A$ if $A$ is a decision problem, or computes $A$ if $A$ is a function problem. 
\end{itemize}
For more detailed definitions, see Ref~\cite{sanjeev}. 
This section will consider the problem of computing single Dirac traces with all indices contracted, and no $\gamma_5$ insertions.
Because these correspond to evaluations of the Tutte polynomial for circle graphs by \Cref{mainthm}, the computational problems we will be interested in are defined in terms of evaluations of the Tutte polynomial:
\begin{definition}
Let $\pi_X(\mathcal{C})$ be the function problem of computing the Tutte polynomial for a specified class of graphs $\mathcal{C}$ in the subset $X$ of the Tutte plane. $X$ can take three forms:
\begin{itemize}
    \item[-] $X$ is either a single point $X = \{(a,b)\}$ where $(a,b) \in \mathbb{R}^2$
    \item[-] $X$ is a rational curve, specified by a parametrisation $f : \mathbb{R} \to \mathbb{R}^2$ where $f(s) = \left(\frac{u(s)}{v(s)}, \frac{w(s)}{z(s)}\right)$ where $u,v,w,z$ are finite-degree polynomials in $s$ with coefficients in $\mathbb{Q}$. In this case, the Tutte polynomial output is considered as a rational function in the parameter $s$. 
    \item[-] $X$ is the entire plane $\mathbb{R}^2$, in which case the Tutte polynomial output is a polynomial in canonical variables $x,y$. 
\end{itemize}
\end{definition}
Computing single Dirac traces corresponds to computing the Tutte polynomial along the line $y = -1$ for circle graphs, and is written as $\pi_{\{(s,-1) : s \in \mathbb{R}\}}(\textbf{Circle})$. 
Mainly this section will deal with the question of whether or not $\pi_{\{(s,-1) : s \in \mathbb{R}\}}(\textbf{Circle}) \in \FP$, the class of functions which are computable in polynomial time, or whether $\pi_{\{(s,-1) : s \in \mathbb{R}\}}(\textbf{Circle})$ is contained in a complexity class believed to be distinct from $\FP$ (which would necessarily require superpolynomial time algorithms to compute), such as the class of $\#\P$-hard problems. 
Though it is possible that $\#\P=\FP$, it would for example immediately imply that $\P = \NP$ which has long been conjectured to be false \cite{cook}. 

It was shown that for the class of all graphs, $\pi_{\{(a,b)\}}(\mathbf{Graph}) \in \#\P$-hard, and $\pi_{\{f(s) : s \in \mathbb{R}\}}(\mathbf{Graph}) \\\in \#\P$-hard in Ref~\cite{Jaeger_Vertigan_Welsh_1990} for generic rational points $(a,b)$ and generic rational curves $f(s)$. These results were then extended to the class of planar graphs in Ref~\cite{vertigan}.  
The analysis for the complexity of evaluating Dirac traces is slightly complicated by the fact that Dirac traces are associated with evaluating Tutte polynomial along the specific curve $y = -1$, and for the class of circle graphs. 
It is however quite natural to conjecture that Dirac traces would also be $\#\P$-hard in the absence of some special structures present only within the class of circle graphs. 
Towards the goal of proving this statement, we follow the strategy used in Ref~\cite{Jaeger_Vertigan_Welsh_1990} for the case of Tutte polynomial evaluations for the class of all graphs. 
The first objective is to identify specific points within the Tutte plane for which evaluations are difficult. 
For the class of circle graphs, deciding whether a given circle graph has a proper $n$-colouring for $n \geq 4$ is known to be $\NP$-complete by reduction to 3-SAT:

\begin{theorem}[\cite{Unger}, Thm. 1]\label{thm:ung}
The problem of deciding whether a given circle graph is $n$-colourable is $\NP$-complete for $n \geq 4$. 
\end{theorem}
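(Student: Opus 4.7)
The plan is to prove $\NP$-completeness in two parts. Membership in $\NP$ is immediate: any candidate $n$-coloring $c : V(G) \to \{1,\dots,n\}$ is a polynomial-size certificate that can be verified by scanning the edge list of $G$ once. The nontrivial half is $\NP$-hardness, which I would establish by a polynomial-time reduction from 3-SAT.

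Given a 3-SAT formula $\phi$ with $m$ variables and $k$ clauses, the goal is to produce a chord diagram, and hence a circle graph $G_\phi$, such that $G_\phi$ is $n$-colorable if and only if $\phi$ is satisfiable. The construction I have in mind uses three kinds of ingredients: (i) a \emph{palette} of $n$ mutually crossing chords, which realizes $K_n$ and forces $n$ distinguished reference vertices to receive all $n$ colors, fixing the palette up to relabelling; (ii) \emph{variable gadgets}, a pair of chords per variable, each pair attached to the palette in such a way that the two chords are restricted to only two of the $n$ colors, representing TRUE and FALSE; (iii) \emph{clause gadgets} that, for each clause $(\ell_1 \vee \ell_2 \vee \ell_3)$, are $n$-colorable if and only if at least one incoming literal chord takes its TRUE color. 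Copies of a variable's color needed in multiple clauses would be propagated by ``wire'' sub-configurations whose chord crossings act as identities on color.

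The principal obstacle is \emph{geometric realizability}. In a general-graph reduction one is free to declare an edge between any two vertices, but here every adjacency and every non-adjacency is dictated by whether two chords cross or not in a single fixed cyclic ordering of the $O(m+k)$ endpoints on one circle. By \Cref{thm:fig2} the class of circle graphs is characterized by the absence, up to local complementation, of certain forbidden induced subgraphs, which sharply constrains which intersection patterns are admissible. The bulk of the technical work is therefore to lay out all gadget endpoints consistently along one circle so that the induced intersection graph coincides with the intended logical structure, without spurious crossings destroying the clause logic. Once the $n = 4$ case is handled, the $n \geq 5$ case reduces to it by adjoining $n - 4$ additional palette chords which mutually cross and cross every gadget chord, forcing those extra colors to be absorbed by the enlarged palette and leaving only a 4-coloring problem for the gadget chords themselves.
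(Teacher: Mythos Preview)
The paper does not prove this theorem; it is quoted verbatim as a result of Unger and used as a black box to deduce \Cref{cor:NP}. There is therefore no ``paper's own proof'' to compare your proposal against.

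That said, your outline captures the right shape of such an argument: membership in $\NP$ is indeed immediate, and the hardness direction in Unger's original proof does proceed by a gadget-based reduction (though from a graph-coloring problem rather than directly from 3-SAT). Your identification of geometric realizability as the central obstacle is exactly right, and your remark that the $n\geq 5$ cases follow from $n=4$ by adjoining universal palette chords is a clean observation. However, what you have written is a \emph{plan} rather than a proof: the sentence ``The bulk of the technical work is therefore to lay out all gadget endpoints consistently along one circle'' is precisely the content of Unger's paper, and you have not supplied it. In particular, the ``wire'' sub-configurations that copy a variable's color to multiple clauses without introducing unwanted crossings are delicate to realize as chord arrangements, and asserting their existence is not the same as constructing them. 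If this theorem were actually being proved here rather than cited, your proposal would need to either exhibit the chord layouts explicitly or invoke a structural lemma guaranteeing their realizability.
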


Unfortunately, this does not immediately imply that the corresponding function problem of computing $T(G;1-n,0)$ for $n \geq 4$ and circle graphs $G$ is $\#\P$-complete, as this would require finding a parsimonious reduction (one that preserves the number of solutions) from $n$-colourability to 3-SAT. 
It may be possible to modify the reduction in Ref~\cite{Unger} to become parsimonious, for now, we will be satisfied with the following corollary:
\begin{cor}\label{cor:NP}
$\pi_{\{(1-n,0)\}}({\normalfont{\textbf{Circle}}})$ for integer $n \geq 4$ (and hence, also $\pi_{\mathbb{R}^2}({\normalfont{\textbf{Circle}}})$) are $\NP$-hard function problems. 
\end{cor}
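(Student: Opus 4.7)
The plan is to establish $\NP$-hardness by a polynomial-time Turing reduction from the decision problem of $n$-colourability of circle graphs, which is $\NP$-complete for $n \geq 4$ by \Cref{thm:ung}.

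First I would specialise the identity in \Cref{prop:b} to the value $q = 0$. From \Cref{defn:gc}, the generalised chromatic polynomial $\chi(G; 0, n)$ enumerates only the maps $V \to \{1,\dots,n\}$ with zero collisions, which is precisely the ordinary chromatic polynomial of $G$ evaluated at $n$. Substituting $q = 0$ into \Cref{eq:16} yields
\begin{equation*}
\chi(G; 0, n) = (-1)^{|V|-c(G)}\, n^{c(G)}\, T\bigl(G;\, 1-n,\, 0\bigr).
\end{equation*}
For any fixed integer $n \geq 4$ the prefactor $n^{c(G)}$ is non-zero, so $T(G; 1-n, 0) = 0$ if and only if $\chi(G; 0, n) = 0$, which in turn holds exactly when $G$ admits no proper $n$-colouring.

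Given this, the reduction is immediate: for any circle graph $G$, a single oracle call to $\pi_{\{(1-n,0)\}}({\normalfont{\textbf{Circle}}})$ returns the integer $T(G; 1-n, 0)$, and testing whether this value is non-zero decides $n$-colourability in polynomial time. Were $\pi_{\{(1-n,0)\}}({\normalfont{\textbf{Circle}}})$ in $\FP$, this would place circle graph $n$-colourability in $\P$, contradicting \Cref{thm:ung} unless $\P = \NP$; hence the function problem is $\NP$-hard. The ``hence'' clause for $\pi_{\mathbb{R}^2}({\normalfont{\textbf{Circle}}})$ is routine: any algorithm returning $T(G; x, y)$ as a bivariate polynomial can be postcomposed with the substitution $(x, y) \mapsto (1-n, 0)$ in polynomial time, reducing the single-point problem to the full-plane problem.

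I do not expect a substantive obstacle here, as the argument is a direct Turing reduction. The only point worth flagging --- not an obstacle but a limitation --- is that this reduction is not parsimonious, which is precisely why the corollary obtains $\NP$-hardness rather than the stronger $\#\P$-hardness shown for arbitrary graphs in Ref~\cite{Jaeger_Vertigan_Welsh_1990}. Upgrading the conclusion would require a parsimonious variant of Unger's reduction, which is exactly the open direction highlighted in the text immediately preceding the corollary.
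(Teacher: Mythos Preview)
Your proposal is correct and follows essentially the same approach as the paper: both specialise \Cref{prop:b} at $q=0$ to identify $T(G;1-n,0)$ with the chromatic polynomial up to a nonvanishing prefactor, reduce from Unger's $\NP$-completeness result via a single oracle call, and handle the full-plane case by polynomial-time substitution. The paper is marginally more explicit that $\NP$-hardness for a function problem means $\NP \subseteq \P^{\pi}$ (Turing reduction with oracle access) rather than a Karp reduction, whereas you phrase the same content contrapositively; both are fine.
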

\begin{proof}
The only technical remark to make is that because $\pi_{\{(1-n,0)\}}({\normalfont{\textbf{Circle}}})$ is a function problem, what is meant by $\NP$-hard in this context is not a polynomial time many-to-one reduction of decision problems (as in Ref~\cite{Karp1972}), but rather that given access to a $\pi_{\{(1-n,0)\}}({\normalfont{\textbf{Circle}}})$-oracle, one can solve any $\NP$ problem in polynomial time, $\NP \subseteq P^{\pi_{\{(1-n,0)\}}({\normalfont{\textbf{Circle}}})}$. 
\Cref{cor:NP} then follows from \Cref{thm:ung}, because deciding whether a given circle graph $G$ is $n$-colourable can be done by computing the chromatic polynomial $\chi(G;0,n) = n^{c(G)} (-1)^{|V|-c(G)} T\left(G; 1 -n, 0\right)$ and checking whether or not the result equals zero. 

This immediately implies $\pi_{\mathbb{R}^2}(\textbf{Circle})$ is also $\NP$-hard, because being able to compute the Tutte polynomial on the entire plane allows you in polynomial time to evaluate the Tutte polynomial at any specific point $(1-n,0)$  by evaluating the polynomial at $x = 1-n, y = 0$. 
\end{proof}
The strategy proceeds by utilising a general theorem proven in Ref~\cite{Jaeger_Vertigan_Welsh_1990} that $\pi_{\mathbb{R}^2}(\mathcal{C})$ is polynomial time reducible to $\pi_{L}(\mathbb{C})$ for generic rational curves $L$, and as long as the class of graphs $\mathcal{C}$ is `large' enough in a technical sense. 
The theorems quoted below from Ref~\cite{Jaeger_Vertigan_Welsh_1990} are all stated in the more general case of matroids, which we now provide a brief a review of. 
A matroid is a pair $M = (S,\rho)$ of a ground set $S$ and a rank function $\rho : 2^S \to \mathbb{Z}$ satisfying $0 \leq \rho(A) \leq |A|$ for all $A \subseteq S$, $\rho(X) \leq \rho(Y)$ for $X \subseteq Y \subseteq S$, and $\rho(X \cup Y) + \rho(X \cap Y) \leq \rho(X) + \rho(Y)$. 
Every graph $G = (V,E)$ gives rise to a `graphic' matroid by letting the ground set be the set of edges $E$, and letting the rank function $\rho(X) = n - c$ where $n$ is the number of vertices in the subgraph formed by the edges in $X$ and $c$ is the number of connected components of the subgraph. 
Instead of specifying the entire rank function $\rho$, a matroid can also be specified by its bases, which are the maximal independent subsets $X \subset S$, where an independent subset $X$ is one that satisfies $\rho(X) = |X|$. 
In the case of graphic matroids, $\rho(X) = |X|$ if and only if $X$ has no cycles ($X$ is a forest). 

The number $f(n)$ of non-isomorphic matroids on a set of $n$ elements grows at least as fast as $\log_2(\log_2(f(n))) \geq n-\frac{3}{2} \log_2(n) + O(\log \log n)$\cite{KNUTH1974398}. As pointed out in Ref~\cite{Jaeger_Vertigan_Welsh_1990}, this implies both the input size, and the runtime for computing the Tutte polynomial on general matroids is exponential in $n$, as the naive algorithm utilising deletion-recurrence relations runs in time exponential in $n$.
The class of matroids corresponding to circle graphs can be represented more succinctly however:

\begin{definition}[\cite{Jaeger_Vertigan_Welsh_1990}]
A class of matroids $\mathcal{C}$ is succinct if there is an injective mapping $e$ of the members of $\mathcal{C}$ into strings of some finite alphabet $\Sigma$, such that if $|e(M)|$ denotes the length of $e(M)$
\begin{equation}
\mathcal{C}_n := \{ (S,\rho) \in \mathcal{C} : |S| = n\}, \quad e(n) := \mathrm{max}\{|e(M)| : M \in \mathcal{C}_n\},
\end{equation}
then there exists some polynomial $p$ such that $e(n) \leq p(n)$ for all $n$. 
\end{definition}

\begin{lem}
The class of circle graphs is succinct.
\end{lem}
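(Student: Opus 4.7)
The plan is to encode each circle graph by a chord diagram realizing it. By definition, any circle graph $G$ with $n$ vertices admits some tuple $x \in P(n)$ with $\mathrm{Gr}(x) = G$: the $2n$ symbols drawn from $\{1,\ldots,n\}$ (each appearing exactly twice) record the cyclic order of chord endpoints around the circle. Writing each integer in binary with a separator yields an encoding over the fixed alphabet $\Sigma = \{0, 1, \#\}$ of total length at most $2n\bigl(1 + \lceil \log_2 n \rceil\bigr)$.

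To make the map $e$ well defined on circle graphs (rather than on chord diagrams, of which a single graph may admit many via local complementations and reflections), one needs a canonical representative. I would set $e(G)$ to be the lexicographically smallest encoded tuple among all $x \in P(n)$ with $\mathrm{Gr}(x) = G$. Injectivity is then automatic, since distinct graphs have disjoint preimages under $\mathrm{Gr}$.

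To verify the polynomial bound in terms of the matroid ground-set size $|S| = |E(G)|$, one may assume without loss of generality that $G$ has no isolated vertices, since an isolated chord contributes nothing to the graphic matroid and can be canonically stripped off. The degree-sum identity then forces $n \leq 2|E(G)|$, so the encoding length $O(n \log n)$ becomes $O(|E| \log |E|)$, polynomial in $|S|$.

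The only delicate step is the canonical-representative choice, but it is essentially cosmetic: any fixed total order on tuples suffices, and the same strategy handles Whitney 2-isomorphic graphs (which produce identical graphic matroids) by taking the lex-smallest chord diagram across every graph in the corresponding 2-isomorphism class. No step presents a real obstacle; the key conceptual point is simply that the chord-diagram representation is an exponentially more compact description than listing bases or independent sets of the matroid, which is what makes $\mathbf{Circle}$ succinct while general matroid classes fail to be.
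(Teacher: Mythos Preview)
Your argument is correct, but the paper takes a much shorter route: it simply observes that circle graphs form a subclass of simple graphs, and that the class of all simple graphs is already succinct via the adjacency matrix. An adjacency matrix has size $O(|V|^2)$, and once isolated vertices are stripped (as you also note), $|V| \le 2|E|$, giving a bound polynomial in the ground-set size $|S| = |E|$. No circle-graph-specific structure is invoked at all.

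Your chord-diagram encoding is genuinely different and in fact yields a tighter length bound, $O(|V|\log|V|)$ rather than $O(|V|^2)$. The price you pay is the canonicalisation step (lex-smallest tuple, and then lex-smallest over the Whitney 2-isomorphism class), which you correctly identify as cosmetic since succinctness demands only the \emph{existence} of a short injective encoding, not its efficient computability. You are also more careful than the paper about the 2-isomorphism issue, which the paper's one-line proof simply elides. So your approach buys a sharper bound and more rigour at the cost of extra bookkeeping; the paper's approach buys brevity by appealing to the ambient class of simple graphs.
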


\begin{proof}
Succinctness of the class of circle graphs follows from succinctness of the class of all simple graphs. 
An explicit succinct encoding of simple graphs is given by writing down the adjacency matrix of a simple graph. 
\end{proof}

A class of simple transformations on matroids are given by tensor products with the uniform matroid $U_{r,n}$, which has ground set of $n$ elements and bases all subsets of size $r$. 
To define this tensor product, a few background definitions are required. 
A pointed matroid $N_d$ is a matroid on a ground set with a distinguished element, the point $d$, which is neither a loop ($\rho(\{d\}) = 0$) nor a coloop ($\rho(S - \{d\}) = \rho(S) - 1$). 
If $M = (S,\rho), N = (T,\lambda)$ are two matroids with $e \in S, f \in T$, the $2$-sum of $M$ and $N$ is the matroid $((S \cup T) \backslash \{e,f\} , \sigma)$ such that for $A \subseteq S \backslash \{e\}, B \subseteq T \backslash \{f\}$ 
\begin{equation}
\sigma(A \cup B) = \rho(A) + \lambda(B) - \delta(A,B) + \delta(\emptyset,\emptyset)
\end{equation}
where $\delta(A,B) = 1$ if both $\rho(A) = \rho(A \cup \{e\})$ and $\lambda(B) = \lambda(B \cup \{f\})$, else $\delta(A,B) = 0$. 
The tensor product $M \otimes N$ of an arbitrary matroid $M$ and the pointed matroid $N$ is obtained by taking the $2$-sum of $M$ with $N$ at each point $e$ of $M$ and the distinguished point $d$ of $N$ \cite{Brylawski}. 
For tensor products with the uniform matroid, it does not matter which element is chosen as the distinguished point, as the matroid is invariant under permutations of the base set. 

Not all uniform matroids are graphic, however the cases of interest are. In particular, $U_{k,k+1}$ corresponds to $(k+1)$-edged cycle graph, and $U_{1,k+1}$ corresponds to the $(k+1)$-edged dipole graph ($(k+1)$ edges connecting two vertices). The tensor product of a graphic matroid with $U_{k,k+1}$ has the effect of replacing every edge with a $k$-stretched version, and the tensor product of a graphic matroid with $U_{1,k+1}$ with a $k$-thickened version, hence the naming convention. For example, $2$-stretchings and $2$-thickenings of $K_4$ are shown below:
\begin{align}
\begin{tikzpicture}[baseline=0pt]
\def\length{0.75}
\node (0) at (0,0) {$\bullet$};
\foreach \i in {1,2,3}
  \node (\i) at ({\length*cos((\i-1)*120+90)},{\length*sin((\i-1)*120+90}) {$\bullet$};
\foreach \i in {0,...,3}
  \foreach \j in {0,...,3}
    \draw (\i.center) to (\j.center);
\end{tikzpicture}
\otimes 
\begin{tikzpicture}[baseline=0pt]
\def\length{0.75}
\foreach \i in {0,...,2}
  \node (\i) at ({\length*cos(\i*120+90)},{\length*sin(\i*120+90)}) {$\bullet$};
\foreach \i[evaluate={\j=int(mod(\i+1,3);}] in {0,...,2}
  \draw (\i.center) to (\j.center);
\end{tikzpicture}
 = 
 \begin{tikzpicture}[baseline=0pt]
\def\length{0.75}
\node (0) at (0,0) {$\bullet$};
\foreach \i in {1,2,3}
  \node (\i) at ({\length*cos((\i-1)*120+90)},{\length*sin((\i-1)*120+90}) {$\bullet$};
\foreach \i in {0,...,3}
  \foreach \j in {0,...,3}
    {\node at ($(\i)+.5*(\j)-.5*(\i)$) {$\bullet$};
    \draw (\i.center) to (\j.center);}
 \end{tikzpicture},\qquad 
 \begin{tikzpicture}[baseline=0pt]
\def\length{0.75}
\node (0) at (0,0) {$\bullet$};
\foreach \i in {1,2,3}
  \node (\i) at ({\length*cos((\i-1)*120+90)},{\length*sin((\i-1)*120+90}) {$\bullet$};
\foreach \i in {0,...,3}
  \foreach \j in {0,...,3}
    \draw (\i.center) to (\j.center);
\end{tikzpicture}
\otimes 
\begin{tikzpicture}[baseline=0pt]
\def\length{0.75}
\node (0) at (0,0.7) {$\bullet$};
\node (1) at (0,-0.3) {$\bullet$};
\foreach \i in {0,...,2}
  \draw (0.center) to [bend right=-40+\i*40] (1.center);
\end{tikzpicture}
 = 
\begin{tikzpicture}[baseline=0pt]
\def\length{0.75}
\node (0) at (0,0) {$\bullet$};
\foreach \i in {1,2,3}
  \node (\i) at ({\length*cos((\i-1)*120+90)},{\length*sin((\i-1)*120+90}) {$\bullet$};
\foreach \i in {0,...,3}
  \foreach \j in {0,...,3}
    \foreach \k in {-1,1}
      \draw (\i.center) to [bend right=\k*10] (\j.center);
\end{tikzpicture} \\
K_4 \hspace{1.75cm} U_{2,3} \hspace{4.2cm}K_4 \hspace{1cm}U_{1,3}\hspace{2cm} \nonumber 
\end{align}

The utility of thinking about these tensor products is that evaluating the Tutte polynomial on the $k$-lengthened or $k$-thickened versions of a matroid at a specific point $(x,y)$ yields the Tutte polynomial for the original matroid at a different point $(x',y')$. Specifically,  

\begin{theorem}[\cite{Jaeger_Vertigan_Welsh_1990}, Eq. 4.1]
For a matroid $M = (S,\rho)$ and a pointed matroid $N$, 
    \begin{equation}
T(M \otimes N; x,y) = T_C(N;x,y)^{|S| - \rho(S)} T_L(N; x,y)^{\rho(S)} T(M; X,Y)
    \end{equation}
where for $k$-lengthening:
\begin{equation}
T_C(U_{k,k+1};x,y) = \sum_{i=0}^{k-1} x^i, \quad T_L(U_{k,k+1};x,y) = 1, \quad X = x^k, \quad Y = \frac{x^k + x(y-1) - y}{x^k - 1}\end{equation}
and for $k$-thickening:
\begin{equation}\label{eq:mod}
T_C(U_{1,k+1};x,y) = 1, \quad T_L(U_{k,k+1};x,y) = \sum_{i=0}^{k-1} y^i, \quad X = \frac{y^k + x(y-1) - y}{y^k - 1}, \quad Y = y^k
\end{equation}
\end{theorem}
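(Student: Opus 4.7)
The plan is to apply Oxley's recipe theorem (\Cref{propox}) to the function $M \mapsto T(M \otimes N; x, y)$, treating $N$ as a fixed pointed matroid and viewing $T(M \otimes N)$ as an invariant of the outer matroid $M$. This reduces the theorem to verifying a base case and a deletion-contraction recurrence for this function, followed by specialization to $N = U_{k,k+1}$ and $N = U_{1,k+1}$.

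The first step is the deletion-contraction recurrence in $M$ itself. For an edge $e$ of $M$ that is neither a bridge nor a loop, I would use the subset-sum expansion of the Tutte polynomial, partitioning each subset $B \subseteq E_{M \otimes N}$ by its intersection $A := B \cap E_{N_e}$ with the gadget attached at $e$. The defining formula of the $2$-sum ensures that the rank function $\rho_{M \otimes N}$ decouples across the gadget boundary up to a single binary correction, depending only on whether $d \in \mathrm{cl}_N(A)$. Summing over the two cases yields
\begin{equation*}
T(M \otimes N; x, y) = \sigma_N(x,y)\, T((M - e) \otimes N; x, y) + \tau_N(x,y)\, T((M/e) \otimes N; x, y),
\end{equation*}
where $\sigma_N$ and $\tau_N$ collect, respectively, the Tutte weights of non-spanning and spanning subsets $A \subseteq E_N - \{d\}$.

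The second step is the base case. When $M$ has only bridges and loops, $M \otimes N$ is a bouquet of gadgets joined at cut vertices, each bridge contributing a copy of $N - d$ and each loop a copy of $N/d$. Multiplicativity of the Tutte polynomial across cut vertices then gives $T(M \otimes N) = T(N-d)^{b(M)} T(N/d)^{l(M)}$, which matches the base case in \Cref{propox} with $\alpha = 1$ and effective base variables $T(N-d), T(N/d)$. Combining with the recurrence, \Cref{propox} produces
\begin{equation*}
T(M \otimes N; x, y) = \sigma_N^{|E_M| - |V_M| + c(M)}\, \tau_N^{|V_M| - c(M)}\, T\!\left(M;\; \tfrac{T(N-d)}{\tau_N},\; \tfrac{T(N/d)}{\sigma_N}\right),
\end{equation*}
which is exactly the claimed structure after identifying $|E_M| - |V_M| + c(M) = |S| - \rho(S)$, $|V_M| - c(M) = \rho(S)$, $T_C = \sigma_N$, $T_L = \tau_N$, $X = T(N-d)/\tau_N$, and $Y = T(N/d)/\sigma_N$.

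The third step is specialization. For $N = U_{k,k+1}$ realized as the cycle $C_{k+1}$ with distinguished edge $d$, direct computation gives $T(N-d) = x^k$ (a path of $k$ bridges), $T(N/d) = y + \sum_{i=1}^{k-1} x^i$ (a $k$-cycle), and $\sigma_N = 1 + x + \cdots + x^{k-1}$, $\tau_N = 1$ by the same spanning/non-spanning partition applied to $N$ itself. Substituting and using the factorizations $x^k - 1 = (x-1)\sum_{i=0}^{k-1} x^i$ and $x^k + x(y-1) - y = (x-1)\bigl(y + \sum_{i=1}^{k-1} x^i\bigr)$ recovers $X = x^k$ and $Y = \frac{x^k + x(y-1) - y}{x^k - 1}$. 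The $k$-thickening case $N = U_{1,k+1}$ is completely symmetric under $x \leftrightarrow y$, with $T(N-d) = x + \sum_{i=1}^{k-1} y^i$ and $T(N/d) = y^k$. The main obstacle is the rank-function decoupling in the first step: one must verify that the rank of the $2$-sum at the gadget boundary splits as a sum of ranks of the two pieces plus a single correction governed by the closure status of $A$ in $N$. This is purely combinatorial but requires careful bookkeeping, and follows the strategy of \cite{Jaeger_Vertigan_Welsh_1990}.
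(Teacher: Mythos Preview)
The paper does not actually prove this theorem: it is quoted verbatim from Jaeger--Vertigan--Welsh \cite{Jaeger_Vertigan_Welsh_1990} and used as a black box. So there is no ``paper's own proof'' to compare against. Your proposal is a correct reconstruction of the standard argument (going back to Brylawski \cite{Brylawski} and made explicit in \cite{Jaeger_Vertigan_Welsh_1990}): treat $M \mapsto T(M\otimes N;x,y)$ as a Tutte--Grothendieck invariant of $M$, verify the deletion--contraction recurrence by the rank decoupling across the $2$-sum, identify the coefficients with $T(N-d)$ and $T(N/d)$, and then invoke the recipe theorem.

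Two small technical remarks. First, \Cref{propox} as stated in this paper is for \emph{graphs}, with the $\alpha^{|V|}$ base case; you are silently using the matroid version of the Oxley--Welsh recipe theorem (which of course holds, and indeed the original \cite{Oxley:1979_} is matroidal), so you should either state it or note that the graphic exponents $|E|-|V|+c$ and $|V|-c$ translate to $|S|-\rho(S)$ and $\rho(S)$. Second, your base-case claim that a coloop of $M$ contributes a factor $T(N-d)$ and a loop of $M$ contributes $T(N/d)$ is correct, but it rests on the computation that the $2$-sum of $U_{1,1}$ with $N$ along $d$ is $N-d$ and the $2$-sum of $U_{0,1}$ with $N$ is $N/d$; since the paper's $2$-sum is defined via the rank formula with the $\delta$-correction, it would strengthen the write-up to spell out that one-line check. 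Your specializations for $U_{k,k+1}$ and $U_{1,k+1}$ are correct as written.
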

Thus, being able to evaluate the Tutte polynomial efficiently for a class of graphs closed under $k$-thickening and $k$-lengthening allows one to take a given graph, and evaluate it's Tutte polynomial on the transformed $(X,Y)$ coordinates, which in turn allows for an interpolation of the Tutte polynomial on the entire plane provided the curve is generic. Note that the $k$-thickening operation is not useful for the problem $\pi_{\{(s,-1) : s \in \mathbb{R}\}}(\textbf{Circle})$ because the class of circle graphs contains only simple graphs and does not contain any $k$-thickened graphs. Furthermore by inspecting \Cref{eq:mod}, it is clear that the $k$-thickening transformation $(x,y) \mapsto(X,Y)$ is only well-defined for the line defined by $y = -1$ by taking some limit as $y \to -1$, complicating the situation. By focusing only on the $k$-lengthening operation, the main theorem of Ref~\cite{Jaeger_Vertigan_Welsh_1990} is slightly modified:

\begin{theorem}[\cite{Jaeger_Vertigan_Welsh_1990}, Thm. 1 (modified)]\label{thm:jmod}
Let $L$ be a rational curve $f(s) = (x(s),y(s)) = \left( \frac{u(s)}{v(s)},\frac{w(s)}{z(s)}\right)$ for polynomials $u,v,w,z$ with coefficients in $\mathbb{Q}$ such that $(x(s)-1)(y(s)-1)$ is not constant, and such that $x(s)$ is also not constant. 
Let $\mathcal{C}$ be a succinct class of matroids closed under $k$-stretching, where computing the $k$-stretch of a succinct representation of a matroid takes polynomial time both in $k$ and the size of the matroid, yielding a succinct representation of the stretched matroid. 
Then $\pi_{\mathbb{R}^2}(\mathcal{C})$ is polynomial time reducible to $\pi_{L}(\mathcal{C})$. 
\end{theorem}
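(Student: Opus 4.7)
The plan is to follow the strategy of Jaeger, Vertigan and Welsh, exploiting the key algebraic identity that $k$-stretching preserves the quantity $(x-1)(y-1)$. Direct manipulation of the stretching formula gives $Y - 1 = (x-1)(y-1)/(x^k-1)$, so that $(X-1)(Y-1) = (x-1)(y-1)$. Consequently, starting from a point $f(s) = (x(s),y(s)) \in L$ and applying $k$-stretching lands on the hyperbola $H_{c(s)} := \{(X,Y) : (X-1)(Y-1) = c(s)\}$, where $c(s) := (x(s)-1)(y(s)-1)$. The reduction strategy is to use oracle calls to $\pi_L(\mathcal{C})$ on the stretched matroids $M \otimes U_{k,k+1}$ to harvest evaluations of $T(M; \cdot, \cdot)$ at polynomially many distinct points of such hyperbolas, then interpolate twice to recover the full bivariate polynomial $T(M;x,y)$.

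First I would fix a rational $s_0$ with $x(s_0) \notin \{0, \pm 1\}$, and for each $k = 1, \ldots, K$ with $K$ polynomial in $|S|$, call the oracle to obtain $T(M \otimes U_{k,k+1}; f(s_0))$. The call is legitimate because $\mathcal{C}$ is closed under $k$-stretching via a polynomial-time-computable succinct encoding. Dividing by the explicit factor $T_C(U_{k,k+1}; f(s_0))^{|S|-\rho(S)}$, which is a rational function of $x(s_0)$, recovers $T(M; x(s_0)^k, Y(s_0,k))$; the $X$-coordinates $x(s_0)^k$ are pairwise distinct because $x(s_0) \notin \{0, \pm 1\}$ is rational. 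Parametrising $H_{c(s_0)}$ by $X$, the expression $(X-1)^{|S|} T(M; X, 1 + c(s_0)/(X-1))$ is a polynomial in $X$ of degree polynomial in $|S|$, which Lagrange interpolation recovers from the harvested values. Iterating the whole procedure with polynomially many $s_i$ whose $c(s_i)$ are distinct (possible because $c(s)$ is nonconstant by hypothesis) yields $T(M; \cdot, \cdot)$ restricted to polynomially many distinct hyperbolas $H_{c_i}$. Since $T(M;x,y)$ is a bivariate polynomial whose $x$- and $y$-degrees are bounded by $|S|$, a standard bivariate Lagrange interpolation over the union of these hyperbolas determines it as a polynomial on all of $\mathbb{R}^2$.

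The main obstacle is orchestrating all of the genericity conditions simultaneously in polynomial time. One must efficiently select enough rational $s_i$ avoiding the finite set of ``bad'' values at which $x(s_i) \in \{0, \pm 1\}$, the denominators $v(s_i), z(s_i)$ of $x,y$ vanish, $T_C(U_{k,k+1}; f(s_i))$ vanishes for some small $k$, or two of the $c(s_i)$ collide; each such obstruction is the vanishing of a fixed nonzero polynomial in $s$, so finitely many rational candidates suffice, and the hypotheses that $x(s)$ and $(x(s)-1)(y(s)-1)$ are both nonconstant are precisely what guarantee the existence of enough good $s_i$. One must also control bit-complexity so that oracle queries, the divisions by $T_C$, and the two layers of Lagrange interpolation all remain polynomial-time; this is routine but needs care since the evaluation points $x(s_i)^k$ grow, and so sizes of intermediate rationals must be bounded by standard estimates for rational Lagrange interpolation.
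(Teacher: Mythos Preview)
The paper does not supply its own proof of this theorem: it is stated as a slight modification of Theorem~1 of Jaeger--Vertigan--Welsh and attributed to that reference, with the only new content being the observation that one can drop the thickening operation at the cost of the extra hypothesis that $x(s)$ is nonconstant. So there is no ``paper's proof'' to compare against beyond the implicit reference to the original JVW argument.

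Your proposal is essentially a correct reconstruction of the JVW strategy specialised to stretching only. The key algebraic input, $(X-1)(Y-1)=(x-1)(y-1)$ under $k$-stretching, is verified correctly, and the two-layer interpolation (first along each hyperbola $H_{c(s_i)}$ by varying $k$, then across hyperbolas by varying $s_i$) is the right idea. Your handling of the genericity conditions is also sound: rationality of $x(s_0)\notin\{0,\pm1\}$ guarantees both that the $x(s_0)^k$ are pairwise distinct and that $T_C(U_{k,k+1};f(s_0))=\sum_{i=0}^{k-1}x(s_0)^i$ never vanishes, and nonconstancy of $c(s)$ together with nonconstancy of $x(s)$ lets you find enough good $s_i$.

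One small clarification would sharpen the second interpolation step: rather than invoking a generic ``bivariate Lagrange interpolation over the union of hyperbolas'', it is cleaner to pass to the Whitney rank coordinates and write $T(M;X,Y)=\sum_{i,j}a_{ij}(X-1)^i(Y-1)^j$. On $H_c$ one then has $(X-1)^{\rho(S)}\,T(M;X,1+c/(X-1))=\sum_{i,j}a_{ij}c^{\,j}(X-1)^{i-j+\rho(S)}$, a polynomial in $X-1$ whose coefficients are polynomials in $c$ of degree at most $|S|$; knowing these for $|S|+1$ distinct values $c=c(s_i)$ recovers every $a_{ij}$ by univariate interpolation in $c$. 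This makes the degree bounds and the polynomial-time claim transparent.
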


{\noindent}Unfortunately, the class of circle graphs is not closed under $k$-lengthenings. 

\begin{prop}
The class of circle graphs is not closed under lengthenings. 
\end{prop}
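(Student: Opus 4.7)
The plan is to exhibit $G = K_4$ as a concrete counterexample with $k = 2$. $K_4$ is a circle graph (realised by four pairwise-crossing chords), and I claim the $2$-stretching $G^{(2)}$ is not. In $G^{(2)}$ the four ``original'' vertices $A, B, C, D$ now form an independent set, and the six ``subdivision'' vertices $v_{XY}$ (one per edge $XY$ of $K_4$) each have degree two, adjacent to exactly $X$ and $Y$. The strategy is a direct chord-diagram argument: assume $G^{(2)}$ has a chord representation and derive a geometric contradiction.

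Since $\{A,B,C,D\}$ is independent, the chords $c_A, c_B, c_C, c_D$ are pairwise non-crossing, so they partition the disc into five regions whose region-adjacency graph is a tree on five nodes with its four edges labelled by $c_A, c_B, c_C, c_D$. For each edge $XY$ of $K_4$, the subdivision chord $c_{XY}$ must cross exactly $c_X$ and $c_Y$, so its endpoints lie in two regions whose unique path in this dual tree traverses precisely the edges $c_X$ and $c_Y$. For all $\binom{4}{2} = 6$ subdivision chords to be realisable, every pair of edges in the tree must form a two-edge path, i.e.\ share a common vertex; in a tree on five nodes this is possible only when the tree is a star, since any other shape contains a three-edge path whose first and last edges are not adjacent.

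Up to symmetry the star structure corresponds to the cyclic endpoint order $A_1 A_2 B_1 B_2 C_1 C_2 D_1 D_2$ around the circle, so the four arcs $(X_1, X_2)$ appear in cyclic order $A, B, C, D$. The endpoints of $c_{AC}$ lie in arcs $A$ and $C$, and those of $c_{BD}$ in arcs $B$ and $D$; since the arcs occur in cyclic order $A, B, C, D$, these four endpoints interleave around the circle, forcing $c_{AC}$ and $c_{BD}$ to cross. But $v_{AC}$ and $v_{BD}$ are non-adjacent subdivision vertices of $G^{(2)}$, so their chords must \emph{not} cross --- a contradiction. Hence $K_4^{(2)} \notin \mathbf{Circle}$, proving that the class of circle graphs is not closed under lengthenings.

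The main obstacle is the reduction of the region-adjacency tree to a star: one must verify that any tree on five nodes in which every pair of edges shares a vertex is necessarily a star, and conversely that any non-star tree contains two edges whose unique connecting path has length greater than two, which would prevent some $c_{XY}$ from being realised. Once this small combinatorial observation is established, the rest of the argument reduces to routine bookkeeping on cyclic arc orders.
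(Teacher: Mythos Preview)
Your proof is correct, and in fact uses the same counterexample as the paper ($G=K_4$ with $k=2$), but the method is genuinely different. The paper does not argue directly with chord diagrams; instead it invokes Bouchet's forbidden-vertex-minor characterisation (stated earlier in the paper as \Cref{thm:fig2}): one performs local complementations at the three spoke-subdivision vertices of $K_4^{(2)}$, deletes those three vertices, and observes that the result is exactly the seven-vertex obstruction graph of \Cref{fig2}(b), so $K_4^{(2)}$ cannot be a circle graph. Your argument replaces this appeal to a deep structural theorem with an elementary region-adjacency analysis: non-crossing chords for the four original vertices give a dual tree on five regions, the six degree-two subdivision chords force that tree to be a star, and then the interleaving of the arcs forces $c_{AC}$ and $c_{BD}$ to cross even though $v_{AC}$ and $v_{BD}$ are non-adjacent. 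The advantage of your route is that it is entirely self-contained and requires nothing beyond the definition of a chord representation; the advantage of the paper's route is economy, since Bouchet's theorem is already on hand and the verification reduces to three mechanical local complementations rather than a case analysis of dual-tree shapes.
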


\begin{proof}
An explicit example of a graph $G$ whose $2$-lengthening is not a circle graph is provided below:
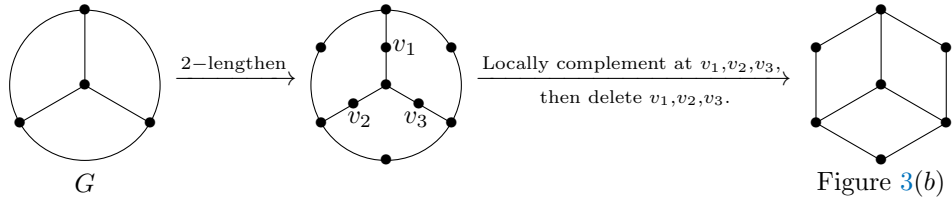
\begin{figure}[H]
    \centering
    \begin{tikzpicture}[baseline=0pt]
\node (0) at (0,0) {$\bullet$};
\foreach \i in {1,3,5}
  \node (\i) at ({cos(\i*360/6+180/6)},{sin(\i*360/6+180/6)}) {$\bullet$};
\foreach \i in {1,3,5}
  \draw (0.center) to (\i.center);
\draw (0,0) circle (1);
\node at (0,-1.3) {$G$};
    \end{tikzpicture}
    $\xrightarrow{2-\mathrm{lengthen}}$
    \begin{tikzpicture}[baseline=0pt]
\node (0) at (0,0) {$\bullet$};
\foreach \i in {1,...,6}
  \node (\i) at ({cos(\i*360/6+180/6)},{sin(\i*360/6+180/6)}) {$\bullet$};
\foreach \i in {1,3,5}
  {\draw (0.center) to (\i.center);
  \node (\i) at ({.5*cos(\i*360/6+180/6)},{.5*sin(\i*360/6+180/6)}) {$\bullet$};}
  \node at (0.25,0.5) {$v_1$};
  \node at (-0.35,-0.45) {$v_2$};
  \node at (0.4,-0.45) {$v_3$};
\draw (0,0) circle (1);
    \end{tikzpicture}
    $\xrightarrow[{ \text{ then delete } v_1,v_2,v_3.}]{{\text{Locally complement at } v_1,v_2,v_3,}}$
    \begin{tikzpicture}[baseline=0pt]
    \node (0) at (0,0) {$\bullet$};
\foreach \i in {1,...,6}
  \node (\i) at ({cos(\i*360/6+180/6)},{sin(\i*360/6+180/6)}) {$\bullet$};
\foreach \i in {1,3,5}
  \draw (0.center) to (\i.center);
\foreach \i[evaluate={\j=int(mod(\i,6)+1;}] in {1,...,6}
  \draw (\i.center) to (\j.center);
  \node at (0,-1.3) {\Cref{fig2}$(b)$};
\end{tikzpicture}
\caption{Example of a circle graph $G$ whose $2$-lengthening is not a circle graph.}\label{fig:ex}
\end{figure}
{\noindent}But $G$ is a circle graph, since it has $4$ vertices, and hence cannot be locally equivalent to a graph that contains one of the graphs in \Cref{thm:fig2} as an induced subgraph, as they all have at least $6$ vertices. An explicit example is given by $\mathrm{Gr}((1,2,3,4,1,2,3,4))$. 
\end{proof}

A weaker statement is of course possible to make:
\begin{theorem}\label{thm:cstar}
Let the class $\mathrm{Circle}^*$ be the class of circle graphs and all $k$-lengthenings of circle graphs. Then computing $\mathrm{Tr}_d$ on this class is $\NP$-hard. 
\end{theorem}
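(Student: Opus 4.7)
The plan is to combine the $\NP$-hardness of $\pi_{\mathbb{R}^2}(\mathbf{Circle})$ already established in \Cref{cor:NP} with the interpolation machinery of \Cref{thm:jmod}. Since $\mathbf{Circle} \subseteq \mathbf{Circle}^*$, any algorithm for $\pi_{\mathbb{R}^2}(\mathbf{Circle}^*)$ immediately solves $\pi_{\mathbb{R}^2}(\mathbf{Circle})$, so $\pi_{\mathbb{R}^2}(\mathbf{Circle}^*)$ is already $\NP$-hard. What remains is a polynomial-time reduction $\pi_{\mathbb{R}^2}(\mathbf{Circle}^*) \leq_P \pi_L(\mathbf{Circle}^*)$ for the line $L = \{(s,-1) : s \in \mathbb{R}\}$, which will follow from \Cref{thm:jmod} once the curve and the class are shown to satisfy its hypotheses.

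For the curve, parameterising by $x(s) = s$ and $y(s) = -1$, neither $x(s)$ nor $(x(s)-1)(y(s)-1) = -2(s-1)$ is constant, so the two genericity conditions of \Cref{thm:jmod} hold. For $\mathbf{Circle}^*$, I would establish succinctness by encoding any element as a pair $(G,k)$ where $G \in \mathbf{Circle}$ and $k \in \mathbb{N}$ is the stretching factor; the circle graph $G$ admits a chord-diagram (or adjacency matrix) encoding of length $O(|V(G)|^2)$ and $k$ takes $O(\log k)$ bits, both polynomial in the ground-set size $k|E(G)|$ of the associated matroid. Closure under $k$-stretching is immediate: $m$-stretching a $k$-stretching of $G$ produces a $km$-stretching of $G$, so $(G,k)$ simply maps to $(G,km)$, a polynomial-time operation in both $m$ and the input size. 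Thus all hypotheses of \Cref{thm:jmod} are met, giving the reduction and hence $\NP$-hardness of $\pi_L(\mathbf{Circle}^*)$ as claimed.

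The bulk of the argument is bookkeeping, and I do not anticipate a major technical obstacle. The most delicate point is making sure the succinct $(G,k)$ encoding interfaces correctly with the interpolation performed inside \Cref{thm:jmod}: namely, evaluating $T(G \otimes U_{k,k+1}; x, -1)$ for polynomially many values of $k$ must yield enough data to interpolate $T(G; x, y)$ on the entire Tutte plane as a polynomial in $(x,y)$. This however is precisely the content of \Cref{thm:jmod} once the conditions above are verified, so the proof reduces to the hypothesis-checking sketched here together with one remark translating the statement back to Dirac traces via \Cref{mainthm}.
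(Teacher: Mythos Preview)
Your proposal is correct and takes essentially the same approach as the paper. The paper's own proof is a single sentence citing \Cref{thm:jmod} and \Cref{cor:NP}; your write-up simply fills in the hypothesis-checking (genericity of the curve $y=-1$, succinctness of $\mathbf{Circle}^*$, closure under stretching, and the inclusion $\mathbf{Circle}\subseteq\mathbf{Circle}^*$) that the paper leaves implicit.
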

\begin{proof}
Follows from \Cref{thm:jmod,cor:NP}. 
\end{proof}

To circumvent the issue of circle graphs not being closed under $k$-stretchings, it is possible to imagine that there is a subset of circle graphs closed under stretchings yet where determining whether the graph is $k$-colourable is still NP-hard. A candidate for such a class is the class of circle graphs that do not contain a $4$-clique, as Ref~\cite{Unger} comments that the $k$-colourability problem is still NP-hard on this class. Unfortunately, this class is also not closed under lengthenings, for example the graph $G'$ obtained from $G$ in \Cref{fig:ex} by taking a single edge and subdividing it into two pieces, is a circle graph with no $4$-clique but whose $2$-lengthening is not a circle graph. 

The situation is considerably simpler for $4$-dimensional traces, for which the computational problem $\pi_{\{(-1,-1)\}}(\mathbf{Circle})$ corresponds to evaluating the Tutte polynomial at the special point $(-1,-1)$.

\begin{theorem}\label{thm:4P}
Computing $4$-dimensional traces is in $\FP$. 
\end{theorem}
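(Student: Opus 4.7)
By \Cref{cor}, the problem reduces to computing the three quantities $|V|$, $c(\mathrm{Gr}(x))$ and $\dim(B(\mathrm{Gr}(x)))$ from the tuple $x$, then assembling the answer as $4(-2)^{|V|+c(\mathrm{Gr}(x))+\dim(B(\mathrm{Gr}(x)))}$. The plan is therefore to verify each of these steps can be carried out in polynomial time in $n$. Constructing $\mathrm{Gr}(x)$ from $x$ is straightforward: loop over the $\binom{n}{2}$ pairs of chords and record an edge whenever the cyclic positions of the two pairs of endpoints interleave. This yields $V$ and $E$ explicitly, from which $|V|$ is read off, and $c(\mathrm{Gr}(x))$ is computed by any standard connected-components traversal such as BFS or DFS.

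The main substantive step is the computation of $\dim(B(G))$ for $G = \mathrm{Gr}(x)$. The plan is to realise the boundary and coboundary maps as explicit $\mathbb{F}_2$-matrices with respect to the bases $V$ and $E$: let $\partial$ be represented by the $|V| \times |E|$ incidence matrix (columns indexed by edges, with a $1$ in the two endpoint rows, zero for loops), and let $\delta$ be represented by its transpose. Then a basis for $\mathrm{im}(\delta)$ is obtained by row-reducing $\delta$, giving independent vectors $w_1,\dots,w_k \in \mathbb{F}_2^{|E|}$ where $k = \mathrm{rank}(\delta)$. The restriction $\partial|_{\mathrm{im}(\delta)}$ is then encoded as the $|V| \times k$ matrix whose $j$-th column is $\partial(w_j)$, and $\dim(B(G)) = k - \mathrm{rank}\bigl(\partial|_{\mathrm{im}(\delta)}\bigr)$ by the rank–nullity theorem applied to $\partial|_{\mathrm{im}(\delta)}$. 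Each of these steps is Gaussian elimination over $\mathbb{F}_2$ on a matrix of size bounded by $|V|+|E| = O(n^2)$, which runs in polynomial time.

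Finally, once the integer exponent $N := |V|+c(\mathrm{Gr}(x))+\dim(B(\mathrm{Gr}(x)))$ has been produced, the output $4(-2)^{N}$ has bit-length $O(N) = O(n^2)$, so it can be written down in polynomial time by repeated squaring with a sign bookkeeping. Assembling the three ingredients gives $\mathrm{Tr}_4(x)$ in $\FP$.

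The only place any care is needed is the linear-algebraic step for $\dim(B(G))$, since the definition of the bicycle space as $\ker(\partial) \cap \mathrm{im}(\delta)$ is given abstractly and must be recast as a concrete rank computation; the rest is bookkeeping. Because this recasting is the standard dictionary between subspace intersections and rank computations of composed maps, I expect no genuine obstacle, only the need to verify carefully that the $\mathbb{F}_2$ arithmetic and matrix sizes remain polynomial throughout.
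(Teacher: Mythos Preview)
Your proposal is correct and follows essentially the same approach as the paper: invoke \Cref{cor}, compute $c(\mathrm{Gr}(x))$ by BFS/DFS, and compute $\dim(B(\mathrm{Gr}(x)))$ by Gaussian elimination over $\mathbb{F}_2$. You supply more detail than the paper on the linear-algebraic step (explicitly using rank--nullity on $\partial|_{\mathrm{im}(\delta)}$) and on constructing $\mathrm{Gr}(x)$ from $x$, but the strategy is identical.
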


\begin{proof}
Let $G = (V,E)$ denote a circle graph. 
Note that one choice of succinct representation consists of writing down its adjacency matrix, whose size scales as $O(|V|)^2$. 
The bicycle space $\mathrm{im}(\delta) \cap \mathrm{ker}(\partial)$ can be computed in polynomial time by gaussian elimination over the field $\mathbb{Z}_2$, as pointed out in Ref~\cite{Jaeger_Vertigan_Welsh_1990}. 
Computing the 4-dimensional trace by utilising $\mathrm{Tr}_4(x) = 4(-2)^{|V| + c(\mathrm{Gr(x)}) + \mathrm{dim}(B(\mathrm{Gr}(x)))}$ is then a polynomial in $|V|$ overhead, as the number of connected components of $\mathrm{Gr}(x)$ can be computed with breadth/depth-first search. 
\end{proof}

In practice, dimensional regularisation calculations usually only require expansions of $\mathrm{Tr}_d$ around $d = 4$ at some finite fixed order (corresponding to the number of loops in the diagram). 
It may be the case that these expansions also have a graph-theoretic interpretation, and can be calculated in polynomial time. 
In light of these partial results, a complete understanding of the computational complexity of computing the Tutte polynomial for the class of circle graphs would be an interesting study to see. 

\section{Conclusion}

In this work, a connection between dimensionally regulated Dirac traces and Tutte polynomials of corresponding graphs was proven. The connection is theoretically interesting in both directions, on the one hand giving a graph-theoretic interpretation of what dimensionally regulated traces means, and on the other hand providing a `physical' interpretation for the Tutte polynomial $T(G;x,y)$ evaluated along $y = -1$ (at least when $G$ is a circle graph). 
Preliminary investigations suggest the connection may also be practically useful, providing faster algorithms to compute Dirac traces for single traces of randomly contracted Dirac matrices. 
These preliminary investigations have to be taken with a grain of salt, given that the distribution of Dirac traces that actually appear in realistic multi-loop calculations is not going to be the uniformly random contracted distribution. 
It would however be interesting to see in the future a complete implementation of arbitrary Dirac traces utilising the connection to Tutte polynomials developed in this work. 

The discussion in \Cref{sec:comp} motivates one to consider a more in-depth study of the computational complexity of Tutte polynomial evaluations on the class of circle graphs (similar to what was performed for planar graphs in Ref~\cite{vertigan}). 
For instance, a natural conjecture to consider is that \Cref{thm:cstar} can be strengthened to a statement that evaluations of $d$-dimensional Dirac traces is $\#\P$-hard. 
Following the strategy outlined in \Cref{sec:comp}, this would first require showing that $\pi_{\{(1-n,0)\}}(\mathbf{Circle})$ are $\#\P$-complete, possibly by showing that the reduction in Ref~\cite{Unger} can be made parsimonious. 
Then perhaps by finding a different transformation (other than $k$-stretching) that circle graphs are closed under, that allows one to interpolate the Tutte polynomial on the entire plane. 
In a similar vein, it would be interesting to consider the computational problem of computing the coefficient of $\epsilon^n$ in $\mathrm{Tr}_d$, where $d = 4 - \epsilon$ for a fixed $n$ (as in practical perturbative QFT calculations calculations are only performed to some fixed number of loops). 

Though a framework for handling open indices, multiple traces and the `tHooft-Veltman $\gamma_5$ is presented in \Cref{sec:4}, it's possible that there is a more natural framework for these objects as well. A more matroid-theoretic interpretation of products of multiple Dirac traces with indices contracted in some fashion may allow for a clean analysis of the computational complexity of this more complicated case. Graph-theoretic interpretations of $\gamma_5$, evanescent operators, and similar complications in dimensionally regulated Dirac matrices may prove fruitful in the future. 

\section*{Acknowledgement}
The author gratefully acknowledges useful discussions with Ryan Abbott and William Detmold. The author is supported in part by the U.S. Department of Energy, Office of Science under grant Contract Number
DE-SC0011090 and by the SciDAC5 award DE-SC0023116, and is additionally supported by the National Science Foundation under Cooperative Agreement PHY-2019786 (The NSF AI Institute for Artificial Intelligence and Fundamental Interactions, http://iaifi.org/).

\begin{appendix}

\section{Open indices, multiple traces and \texorpdfstring{$\gamma_5$}{gamma5}}
\label{sec:4}

The formalism established in \Cref{sec:proof} deals with the case where all indices within a single Dirac trace are contracted amongst each other, and there are no $\gamma_5$ insertions. 
In practical calculations, these conditions are violated if there are multiple fermion loops connected by some interactions, or the theory under investigation contains chiral interactions. 
Formally speaking, the problem of computing products of Dirac traces with indices contracted in some arbitrary fashion can be reduced to the problem of computing single Dirac traces with some open (uncontracted) indices. Furthermore, Dirac traces with $\gamma_5$ insertions (in the `t-Hooft Veltman $\gamma_5$ scheme) can similarly be reduced to the problem of computing Dirac traces with open indices.
Though it is standard knowledge about how to reduce the problem of traces with open indices to traces with all indices contracted, it seems helpful to provide a perspective on how to formalise this procedure. 

To formalise the problem, it is easier to now specify a specific alphabet of symbols. Let $P(n,m)$ be the set of all tuples of length $2n+2m$ where each symbol in $\{\mu_1,\dots,\mu_n\}$ appears twice (these are the contracted indices) and each symbol in $\{\nu_1,\dots,\nu_{2m}\}$ appears only once. Unlike in the case with all indices contracted, recursion relations of the form \Cref{eq:rec1,eq:rec2} are not sufficient to determine $\mathrm{Tr}_d$ on $P(n,m)$ due to the fact that anticommuting uncontracted indices doesn't help simplify the expression. The additional physical axiom that is imposed is that the only structures that can appear in the trace $\mathrm{Tr}_d(x)$ for $x \in P(n,m)$ are metric tensors $g_{\nu_i,\nu_j}$. Let $Q(m)$ be the set of perfect matchings on the complete graph $K_{2m}$ with labelled vertices $\{1,\dots,2m\}$. Then the trace can be written as a linear combination:
\begin{equation}
\mathrm{Tr}_d(x) = \sum_{G \in Q(m)} C_G \left( \prod_{e \in G} g_{\nu_{\mathrm{src}(e)} \nu_{\mathrm{snk}(e)}} \right)
\end{equation}
for some constants $C_G$, where for each edge $e$ in each perfect matching $G$, an arbitrary orientation has been picked. This arbitrary orientation has no affect on the resulting expression, as the metric tensor is symmetric ($g_{\alpha \beta} = g_{\beta \alpha}$). The problem has thus been reduced to determining the constants $C_G$. What is calculable with the technology introduced thus far are traces where the uncontracted indices are contracted, in other words for some $G' \in Q(m)$, 
\begin{align}\label{eq:ui}
\mathrm{Tr}_d\left(\prod_{e \in G'} g_{\nu_{\mathrm{src}(e)} \nu_{\mathrm{snk}(e)}}  \cdot x\right) &= \sum_{G \in Q(m)} C_{G}   \left(\prod_{e' \in G'} g_{\nu_{\mathrm{src}(e')} \nu_{\mathrm{snk}(e')}} \right)\left(\prod_{e \in G}  g_{\nu_{\mathrm{src}(e)} \nu_{\mathrm{snk}(e)}}\right)   \nonumber \\
&= \sum_{G \in Q(m)} C_{G}\  d^{\text{ cyc}(G \star G')}
\end{align}
where `$\mathrm{cyc}$' is the function that counts the number of undirected cycles of a graph, and $G \star G'$ is the graph obtained by gluing the labelled vertices of the two graphs together. The left-hand-side of \Cref{eq:ui} has no uncontracted indices, and can be computed by a Tutte polynomial evaluation of the corresponding graph, and the right hand side is a linear combination of the unknown $C_{G}$ coefficients. By computing \Cref{eq:ui} for all $G' \in Q(m)$, enough contraints are found to constrain $C_G$ for all $G \in Q(m)$. With the observations made above, the Dirac trace with open indices can be defined as follows:

\begin{definition}\label{defn:uncontracted}
Let $A$ be a $|Q(m)| \times |Q(m)|$ sized matrix, whose entries are given by:
\begin{equation}
A_{G_1,G_2} = d^{\ \mathrm{ cyc}(G_1 \star G_2)}
\end{equation}
The $d$-dimensional trace is extended to $P(n,m)$ by defining:
\begin{equation}
\mathrm{Tr}_d(x) = \sum_{G_1,G_2 \in Q(m)} 
 \mathrm{Tr}_d \left( \prod_{e \in G_1} g_{\nu_{\mathrm{src}(e)}\nu_{\mathrm{snk}(e)}} x \right) (A^{-1})_{G_1,G_2} \left(\prod_{e \in G_2} g_{\nu_{\mathrm{src}(e)} \nu_{\mathrm{snk}(e)}} \right)
\end{equation}
\end{definition}

A practical way to approach the problem is to use anticommutation relations to move all the uncontracted indices to the left of the string. The contracted indices then can be handled via a Tutte polynomial evaluation, and what's left is a trace over purely uncontracted indices $\mathrm{Tr}_d(\nu_1, \nu_2,\dots,\nu_{2m})$. 
There is a simple formula for this trace, which requires first a few technical definitions and statements:

\begin{definition}
The sign $\mathrm{sign}(G)$ of a perfect matching $G \in Q(n)$ on the complete graph $K_{2n}$ with labelled vertices $\{1,\dots,2n\}$ is given by the sign of the naturally constructed involution in the symmetric group $S_{2n}$. Explicitly:
\begin{equation}
\mathrm{sign}(G) := \prod_{e \in G} (-1)^{\mathrm{snk}(e) - \mathrm{src}(e) + 1}
\end{equation}
where an arbitrary orientation has been chosen for the edges $e \in G$. 
\end{definition}

\begin{lem}
For $n \geq 0$, 
\begin{equation}
    \sum_{G \in Q(n)} \mathrm{sign}(G) = 1
\end{equation}
\end{lem}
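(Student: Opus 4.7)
The plan is to proceed by induction on $n$, after adopting the geometric re-interpretation of $\mathrm{sign}(G)$ as $(-1)^{\mathrm{cr}(G)}$, where $\mathrm{cr}(G)$ is the number of pairs of edges of $G$ that cross when the vertices $1, 2, \ldots, 2n$ are laid out on a line with the matching drawn as arcs above. This is the standard Pfaffian-style sign obtained by reading the edges in the natural left-to-right order. The base case $n = 0$ gives $S_0 := \sum_{G \in Q(0)} \mathrm{sign}(G) = 1$, from the empty matching.

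For the inductive step, I would partition $Q(n)$ by the partner $j \in \{2, \ldots, 2n\}$ of vertex $1$. Given a matching $G \in Q(n)$ containing the edge $(1, j)$, write $G'$ for the induced matching on $\{2, \ldots, 2n\} \setminus \{j\}$. Then $\mathrm{cr}(G) = \mathrm{cr}(G') + k(G')$, where $k(G')$ counts the edges of $G'$ that cross the arc $(1, j)$, i.e.\ those with one endpoint in $L := \{2, \ldots, j-1\}$ and the other in $R := \{j+1, \ldots, 2n\}$. The key observation is that the parity of $k(G')$ depends only on $j$: the $|L| = j-2$ vertices of $L$ are paired either across to $R$ or internally in pairs, so $|L| - k(G')$ must be even, forcing $k(G') \equiv j \pmod{2}$. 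Consequently $\mathrm{sign}(G) = (-1)^{j}\,\mathrm{sign}(G')$, uniformly over $G'$.

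Summing over $j$ and using the inductive hypothesis gives
\[
S_n \;=\; \Bigl(\sum_{j=2}^{2n} (-1)^{j}\Bigr) \, S_{n-1} \;=\; S_{n-1},
\]
because $\sum_{j=2}^{2n}(-1)^j$ has $2n-1$ terms alternating from $+1$ and therefore equals $1$. Combined with $S_0 = 1$, this closes the induction and proves $S_n = 1$ for all $n \geq 0$.

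The only step requiring genuine argument is the parity claim $k(G') \equiv j \pmod{2}$, which is really a pigeonhole statement about $L$ being perfectly paired. Everything else is bookkeeping about how the crossing parity decomposes when one edge is distinguished. An equivalent algebraic route, which I would mention as a sanity check, is to recognize the sum as the Pfaffian $\mathrm{Pf}(J)$ of the antisymmetric matrix with $J_{ij} = \mathrm{sgn}(j - i)$ and expand along the first row: the very same alternating sum $\sum_{j=2}^{2n}(-1)^j$ appears, and induction on matrix size yields $\mathrm{Pf}(J) = 1$.
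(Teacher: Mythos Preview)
Your argument is correct and follows exactly the paper's inductive structure: partition $Q(n)$ by the partner $j$ of vertex $1$, show the sign factors as $(-1)^j$ times the sign of the residual matching, and conclude via $\sum_{j=2}^{2n}(-1)^j = 1$. The only difference is that the paper asserts the $(-1)^j$ factor without comment, whereas you supply it through the crossing-parity observation---a welcome elaboration rather than a different route.
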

\begin{proof}
This lemma can be proven by induction on $n$. For the basecase, $Q(0)$ contains only the empty perfect matching, which has sign $1$. For any $n \geq 1$, the node labelled by $1$ is matched to some other node $i \in \{2,\dots,2n\}$. The result of summing over the signs of all such perfect matchings fixing $i$ gives $(-1)^i \sum_{G \in Q(n-1)} \mathrm{sign}(G) = (-1)^i$ by inductive hypothesis. Finally, summing over $i \in \{2,\dots,2n\}$ proves the lemma.  
\end{proof}

As an intermediate step, the following theorem gives an interesting alternate way of evaluating the $d$-dimensional trace:

\begin{theorem}\label{thm:interesting}
For $x \in P(n)$, 
\begin{equation}\label{target}
\mathrm{Tr}_d (x) = 4 \sum_{G \in Q(n)} (-1)^{\mathrm{sign}(G)} d^{\ \mathrm{cyc}(G \star \mathrm{Gr}'(x))} 
\end{equation}
where $\mathrm{Gr}'(x)$ is the graph on $2n$ vertices where vertex $i$ is connected to vertex $j$ by an edge if and only if $x_i = x_j$. 
\end{theorem}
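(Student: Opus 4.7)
The plan is to reduce the fully-contracted trace $\mathrm{Tr}_d(x)$ to a Wick-style expansion of the corresponding fully-\emph{uncontracted} trace, and then collapse the resulting metric contractions using the cycle structure of the union of two perfect matchings. Concretely, I would first establish that for distinct symbols $\nu_1,\ldots,\nu_{2m}$,
\begin{equation*}
\mathrm{Tr}_d(\gamma_{\nu_1}\cdots\gamma_{\nu_{2m}}) \;=\; 4\sum_{G \in Q(m)} \mathrm{sign}(G) \prod_{e \in G} g_{\nu_{\mathrm{src}(e)}\nu_{\mathrm{snk}(e)}}.
\end{equation*}
This is proved by induction on $m$: the base case $m=0$ matches $\mathrm{Tr}_d(\emptyset)=4$ via the empty matching. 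For the inductive step I would use $\{\gamma_{\nu_1},\gamma_{\nu_j}\} = 2g_{\nu_1\nu_j}\mathbbm{1}$ from \Cref{defn:Trd} to anticommute $\gamma_{\nu_1}$ rightward through the string, generating one $g_{\nu_1\nu_j}$-contracted shorter trace for each $j$, weighted by the sign accumulated from the anticommutations. Summing over $j$ and invoking the inductive hypothesis reproduces the full sum over $Q(m)$, with the per-matching sign contributions agreeing with $\mathrm{sign}(G)$ by construction of the definition preceding \Cref{thm:interesting}.

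Once this uncontracted formula is in place, I would handle a general $x \in P(n)$ by viewing the $2n$ positions as carrying formally distinct labels $\nu_1,\ldots,\nu_{2n}$, applying the uncontracted formula, and then imposing the identifications $\nu_i \equiv \nu_j$ whenever $x_i = x_j$. These identifications constitute precisely the perfect matching $\mathrm{Gr}'(x) \in Q(n)$.

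The central combinatorial step is then that the edge-union $G \star \mathrm{Gr}'(x)$ of two perfect matchings on $\{1,\ldots,2n\}$ is a $2$-regular multigraph, hence a disjoint union of cycles. Around each cycle the edges alternate between $G$ (metric tensors) and $\mathrm{Gr}'(x)$ (index identifications), so the metric tensors contract head-to-tail into a single trace-loop evaluating to $g^{\mu}{}_{\mu} = d$. The product over all cycles therefore contributes exactly $d^{\mathrm{cyc}(G \star \mathrm{Gr}'(x))}$, and substituting back into the uncontracted expansion gives the claimed formula.

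The main obstacle is the sign bookkeeping in the induction establishing the uncontracted formula; the cycle decomposition and the collapse of each cycle into a factor of $d$ are conceptually immediate once the two matchings are unioned. Structurally, the argument is a graphical incarnation of Wick's theorem for the Dirac algebra: each cycle in $G \star \mathrm{Gr}'(x)$ plays the role of a closed index loop in a fully contracted diagram, contributing a factor of $d$ exactly as in standard dimensional-regularisation accounting.
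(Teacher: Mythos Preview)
Your argument is correct, but it takes a genuinely different route from the paper. The paper proves \Cref{thm:interesting} by invoking the already-established chromatic representation of \Cref{prop:a}, namely $\mathrm{Tr}_d(x)=4(-1)^{|E|}\chi(\mathrm{Gr}(x);-1,d)$, and then arguing (rather sketchily, ``by some kind of inclusion-exclusion'') that the sum over colourings reorganises into the sum over perfect matchings weighted by $d^{\mathrm{cyc}(G\star\mathrm{Gr}'(x))}$. Only afterwards does the paper derive the open-index Wick formula \eqref{eq:perm} as a corollary of \Cref{thm:interesting} together with \Cref{defn:uncontracted}. You run the logic in the opposite direction: you establish the Wick expansion first, directly from the anticommutation recursion, and then specialise the metric contractions along the pairing $\mathrm{Gr}'(x)$ to obtain the cycle-counting formula. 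Your route is the standard physicist's derivation and is arguably cleaner and more self-contained than the paper's, since it never touches the Tutte/chromatic machinery; the paper's route has the virtue of tying the result back to the central graph-polynomial theme. One small point of care: in the paper's formalism, $\mathrm{Tr}_d$ is only axiomatised on fully contracted tuples $P(n)$ in \Cref{defn:Trd}, and the open-index trace is not defined until \Cref{defn:uncontracted}, which itself sits logically after \Cref{thm:interesting}. To stay strictly within the paper's order, your induction should be phrased as directly verifying that the right-hand side of \eqref{target} satisfies the recursions \eqref{eq:rec1}--\eqref{eq:rec2} of \Cref{defn:Trd} and then appealing to the uniqueness in \Cref{prop:a}; the combinatorics are identical to what you wrote, with the ``formal open indices'' serving only as bookkeeping.
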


\begin{proof}
Fix a positive integer value of $d$, and consider the generalised chromatic number formula for $\mathrm{Tr}_d$ given in \Cref{prop:a}:
\begin{equation}\label{temp}
\mathrm{Tr}_d(x) = 4 (-1)^{|E|}  \ \chi(\mathrm{Gr}(x);-1,d)
\end{equation}
The RHS of \Cref{target} has a similar structure to the RHS of \Cref{temp}. For every graph $G \in Q(n)$, $d^{\ \mathrm{cyc}(G \star \mathrm{Gr}(x))}$ is counting the number of non-proper vertex-colourings of $\mathrm{Gr'}(x)$ if there are $d$ colours where each cycle is coloured the same. As the cycles are colored the same, each of these colorings can be associated with a colouring appearing in $\chi(\mathrm{Gr}(x); -1,d)$. For a single term in the LHS sum (where each of the contractions is colored with a specific color), it matches correctly onto the contributions on the right, by some kind of inclusion-exclusion. 
\end{proof}

Finally, traces with all uncontracted indices can be evaluated by the following formula:
\begin{theorem}
For $x = (\nu_1,\cdots,\nu_{2m}) \in P(0,m)$, the equation:
\begin{equation}\label{eq:perm}
\mathrm{Tr}_d(x) = 4 \sum_{G \in Q(m)} (-1)^{\mathrm{sign}(G)} \left( \prod_{e \in G} g_{\nu_{\mathrm{src}(e)} \nu_{\mathrm{snk}(e)}} \right) 
\end{equation}
holds where $\mathrm{sign}(G)$ indicates its sign when canonically associated to a permutation in $S_{2m}$. 
\end{theorem}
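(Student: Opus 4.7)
The plan is to apply \Cref{defn:uncontracted} directly with $n=0$ and reduce each term to the already-proven closed form for fully-contracted traces given in \Cref{thm:interesting}, at which point the matrix $A$ will cancel against its inverse.

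First I would unpack the definition: for $x = (\nu_1,\dots,\nu_{2m}) \in P(0,m)$,
\[
\mathrm{Tr}_d(x) \;=\; \sum_{G_1, G_2 \in Q(m)} \mathrm{Tr}_d\!\left(\prod_{e \in G_1} g_{\nu_{\mathrm{src}(e)}\nu_{\mathrm{snk}(e)}} \cdot x\right) (A^{-1})_{G_1, G_2} \prod_{e \in G_2} g_{\nu_{\mathrm{src}(e)}\nu_{\mathrm{snk}(e)}}.
\]
The inner trace is of a fully-contracted string of $2m$ gamma matrices whose pairing pattern is precisely the matching $G_1$. Consequently, treating the contracted input as an element of $P(m)$, the associated ``repetition graph'' $\mathrm{Gr}'$ on the $2m$ position labels is exactly $G_1$, and \Cref{thm:interesting} gives
\[
\mathrm{Tr}_d\!\left(\prod_{e \in G_1} g_{\nu_{\mathrm{src}(e)}\nu_{\mathrm{snk}(e)}}\, x\right) \;=\; 4 \sum_{G \in Q(m)} (-1)^{\mathrm{sign}(G)} d^{\,\mathrm{cyc}(G \star G_1)} \;=\; 4 \sum_{G \in Q(m)} (-1)^{\mathrm{sign}(G)} A_{G, G_1},
\]
where the last equality is just the definition of the matrix $A$.

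Substituting this expression back and pulling the sum over $G_1$ inward, the telescoping identity $\sum_{G_1} A_{G, G_1} (A^{-1})_{G_1, G_2} = \delta_{G, G_2}$ collapses two of the three matching sums, leaving precisely
\[
\mathrm{Tr}_d(x) \;=\; 4 \sum_{G \in Q(m)} (-1)^{\mathrm{sign}(G)} \prod_{e \in G} g_{\nu_{\mathrm{src}(e)}\nu_{\mathrm{snk}(e)}},
\]
which is \Cref{eq:perm}.

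The main obstacle I anticipate is the implicit well-posedness of \Cref{defn:uncontracted}: the cancellation of $A$ against $A^{-1}$ requires that $A$ be invertible (at least generically in $d$), which must be verified separately rather than taken for granted. A secondary technical point is that both the definition of $A$ and \Cref{thm:interesting} involve a choice of orientation on the edges of each matching; one should check invariance under reversing any individual edge (which follows from symmetry of $g_{\mu\nu}$ and from the fact that $\mathrm{cyc}(G \star G_1)$ is orientation-independent) so that the intermediate formulas are consistent with the formulations in \Cref{defn:uncontracted} and \Cref{thm:interesting}.
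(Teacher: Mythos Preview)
Your proposal is correct and is exactly the approach the paper takes: the paper's proof reads simply ``Follows from \Cref{defn:uncontracted,thm:interesting},'' and you have spelled out precisely how those two ingredients combine, namely that the fully contracted inner trace equals $4\sum_G (-1)^{\mathrm{sign}(G)} A_{G,G_1}$ by \Cref{thm:interesting}, after which $A$ cancels against $A^{-1}$. Your remarks about the invertibility of $A$ and orientation-independence are valid technical caveats that the paper leaves implicit.
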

\begin{proof}
Follows from \Cref{defn:uncontracted,thm:interesting}. 
\end{proof}

For example for $n\in\{1,2\}$ \Cref{eq:perm} takes the form:
\begin{align}
\mathrm{Tr}_d(\gamma_{\nu_1} \gamma_{\nu_2}) &= 4 g_{\nu_1\nu_2} \nonumber \\
\mathrm{Tr}_d(\gamma_{\nu_1}\gamma_{\nu_2} \gamma_{\nu_3}\gamma_{\nu_4}) &= 4 g_{\nu_1 \nu_2} g_{\nu_3 \nu_4} - 4 g_{\nu_1 \nu_3} g_{\nu_2 \nu_4} + 4 g_{\nu_1 \nu_4} g_{\nu_2 \nu_3} 
\end{align}
Now that the technology is set up to treat traces including open indices, extensions to multiple traces and the `t-Hooft-Veltman (HV) $\gamma_5$ is not difficult. In particular, products of multiple traces where indices are possible contracted amongst the different traces can formally be handled by doing single dirac traces with open indices, and contracting the resulting metric tensors. Also, the HV $\gamma_5$ is specified by:
\begin{equation}
\gamma_5 := \frac{1}{4!} \tilde{\epsilon}_{\mu_1 \mu_2 \mu_3 \mu_4} \gamma_{\mu_1} \gamma_{\mu_2} \gamma_{\mu_3} \gamma_{\mu_4} 
\end{equation}
where $\tilde{\epsilon}$ is the fully-antisymmetric tensor in four dimensions; hence traces involving $\gamma_5$ are performed simply as traces with open indices. The $\sim$ emphasizes that the $\epsilon$-tensor is a 4-dimensional object in the HV-scheme. The 4-dimensional tensors satisfy the contraction relations:
\begin{equation}
\tilde{\epsilon}_{\mu_1 \mu_2 \mu_3 \mu_4} \tilde{\epsilon}_{\nu_1 \nu_2 \nu_3 \nu_4} = \sum_{\sigma \in S_4} \mathrm{sign}(\sigma) \prod_{i = 1}^4 \tilde{g}_{\mu_i \nu_{\sigma(i)}},
\end{equation}
\begin{equation}
g_{\nu \mu_1} \tilde{\epsilon}_{\mu_1 \mu_2 \mu_3 \mu_4} = \tilde{g}_{\nu \mu_1} \tilde{\epsilon}_{\mu_1 \mu_2 \mu_3 \mu_4} = \tilde{\epsilon}_{\nu \mu_2 \mu_3 \mu_4}, \qquad  
\tilde{g}_{\nu \mu_1} g_{\nu \mu_2} = \tilde{g}_{\nu \mu_1} \tilde{g}_{\nu \mu_2} = \tilde{g}_{\mu_1 \mu_2}, \qquad \tilde{g}_{\mu \mu} = 4.
\end{equation}

\newpage 
\section{Notation}
\begin{longtable}{p{2cm}|p{12cm}}

   $b(G)$ & The number of bridges of the graph $G$. A bridge is an edge which, if removed, causes $c(G)$ to increase by one. \\
   $B(G)$ & The space of bicycles of the graph $G$. \\
   $C_{i,j}$ & Operation on $P(n)$ that contracts the $i$-th index with the $j$-th index. \\
   $c(G)$ & The number of connected components of the graph $G$. \\
   $\chi(G; q,n)$ & Generalised chromatic-polynomial function, where $\chi(G;0,n)$ is the regular chromatic polynomial in the variable $n$. \\
   $\mathbf{Circle}$ & The class of circle graphs. \\
   $\mathbf{Circle}^*$ & The class of circle graphs, and all $k$-lengthenings of circle graphs. \\
   $\mathrm{coll}(f)$ & The count of the number of collisions in the coloring described by $f$. \\
   $d$  & Dimension of Euclidean spacetime, analytically continued from integer values to arbitrary complex values.  \\
   $\partial, \delta$ & Boundary and coboundary maps used to define the bicycle number. \\ 
   $\FP$ & Function problems which can be computed in polynomial time. \\
   $G \star G'$  & For two graphs $G,G'$ with subsets of vertices that are labelled the same, $G \star G'$ is the graph obtained by gluing the vertex subsets together. \\
   $\mathrm{Gr}(x)$ & For $x \in P(n)$, the graph formed by placing the elements of $x$ around a circle, connecting contracted elements by straight chords, and taking the intersection graph. \\
   $\mathrm{Gr}'(x)$ & For $x \in P(n)$, the graph formed on $2n$ labelled vertices $\{1,\dots,2n\}$ such that $i$ and $j$ are connected by an edge if and only if $i \neq j$ and $x_i = x_j$. \\
   $K_n$ & Complete graph on $n$ vertices. \\
   $K_{m,n}$ & Complete bipartite graph (an edge connecting every node in a set of $m$ vertices to every node in a set of $n$ vertices).\\
   $l(G)$ & The number of loops of the graph $G$. A loop is an edge which connects a vertex to itself. \\
   $\NP$ & The class of decision problems for which a verification Turing machine can verify a given problem has a solution in polynomial time given a polynomial length certificate. \\
   $\#\P$ & The class of function problems corresponding to counting the number of certificates of a given $\NP$ decision problem. \\
   $\P$ & The class of decision problems solvable in polynomial time. \\
   $\pi_X(\mathcal{C})$ & Function problem corresponding to computing the Tutte polynomial on some subset $X \in \mathbb{R}^2$ for the family of graphs described by $\mathcal{C}$. \\
   $P(n)$ & Set of all permutations of tuples of length $2n$ with symbols from $\Sigma$, such that each symbol appears exactly twice. \\
   $Q(n)$ & Set of perfect matchings on the complete graph $K_{2n}$. \\
   $\Sigma$ & Countable alphabet of symbols, usually labelled as $\Sigma = \{\mu_1,\mu_2,\dots\}$. \\
   $S_{i,j}$ & Operation on $P(n)$ that swaps the $i$-th element in the tuple with the $j$-th element in the tuple. \\
   $S_n$ & Symmetric group of permutations on $n$ elements. \\
   $\mathrm{src}(e),\mathrm{snk}(e)$ & The source and sink/destination of an oriented edge $e$. \\ 
   $T(G;x,y)$ & The Tutte polynomial of the graph $G$, in the variables $x$ and $y$. \\
   $\mathrm{Tr}_d$ & The dimensionally regulated Dirac trace in $d$-dimensions. \\
   $\mathbb{Z}[x_1,\dots,x_n]$ & Ring of polynomials in the variables $x_1,\dots,x_n$ with integer coefficients. \\
\end{longtable}

\end{appendix}

\bibliographystyle{unsrtnat}
\bibliography{refs} 

\end{document}